\documentclass[11pt,letterpaper]{article}
\usepackage[margin=1in]{geometry}
\usepackage[T1]{fontenc}
\usepackage{lmodern,amsmath,amssymb,amsthm,booktabs,array,tabularx,microtype}
\usepackage{setspace}
\usepackage{xcolor}
\usepackage{comment}
\usepackage{tikz}
\usepackage{afterpage}
\usetikzlibrary{arrows.meta,patterns,decorations.pathreplacing,positioning,calc}

\definecolor{LFiveA}{RGB}{205,221,235}
\definecolor{LFiveB}{RGB}{222,237,224}
\definecolor{LFiveCharge}{RGB}{245,224,186}
\definecolor{LFiveGuard}{RGB}{67,73,80}
\tikzset{
  lf5/base/.style={draw=black,line width=.55pt},
  lf5/A/.style={lf5/base,fill=LFiveA},
  lf5/B/.style={lf5/base,fill=LFiveB,
    postaction={pattern=north east lines,pattern color=black!23}},
  lf5/charge/.style={lf5/base,fill=LFiveCharge,double=LFiveCharge,double distance=1pt},
  lf5/guard/.style={lf5/base,fill=LFiveGuard,text=white},
  lf5/idle/.style={draw=black!55,densely dashed,fill=white,line width=.45pt},
  lf5/axis/.style={-{Stealth[length=1.5mm]},line width=.45pt},
  lf5/guide/.style={draw=black!32,densely dotted,line width=.4pt},
  lf5/brace/.style={decorate,decoration={brace,amplitude=3pt},line width=.4pt},
  lf5/release/.style={isosceles triangle,shape border rotate=90,
    inner sep=0pt,minimum width=3.7pt,minimum height=3.7pt,
    draw=black,fill=black},
  lf5/due/.style={diamond,inner sep=0pt,minimum size=4.8pt,
    draw=black,fill=white,line width=.6pt}
}
\usetikzlibrary{shapes.geometric}

\newcommand{\LFiveJob}[4]{%
  \draw[lf5/#1] (#2,.35) rectangle (#3,1.05);
  \node[font=\small,text=black] at ({(#2+#3)/2},.70) {#4};}
\newcommand{\LFiveGuardJob}[3]{%
  \draw[lf5/guard] (#1,.35) rectangle (#2,1.05);
  \node[text=white,font=\small] at ({(#1+#2)/2},.70) {#3};}
\newcommand{\LFiveRelease}[1]{\node[lf5/release] at (#1,.28) {};}
\newcommand{\LFiveDue}[1]{\node[lf5/due] at (#1,1.05) {};}
\newcommand{\LFiveTick}[3]{%
  \draw (#1,.05)--(#1,-.05);
  \node[anchor=north,inner sep=1pt] at (#1,#2) {#3};}

\newif\ifTWTshowblue
\TWTshowbluetrue

\usepackage[round,authoryear]{natbib}
\usepackage{xurl}
\usepackage[hidelinks]{hyperref}
\hypersetup{pdfauthor={},pdftitle={Strong NP-Hardness and Approximation Algorithm for Equal-Processing-Time Weighted Tardiness}}
\newtheorem{theorem}{Theorem}
\newtheorem{lemma}{Lemma}
\newtheorem{proposition}{Proposition}

\newcommand{\OPT}{\operatorname{OPT}}
\newcommand{\UB}{\overline K}
\newcommand{\FS}{F_{\mathrm{sum}}}
\newcommand{\problem}{1\mid r_j,p_j=p\mid\sum_jw_jT_j}
\newcolumntype{Y}{>{\raggedright\arraybackslash}X}
\title{Strong NP-Hardness and Approximation Algorithm for Weighted Tardiness with Release Dates and Identical Processing Times}
\author{Zhi-Long Chen\\[6pt]{\normalsize University of Maryland}
\and Nicholas G. Hall\\[6pt]{\normalsize The Ohio State University}}
\date{September 23, 2026}
\begin{document}
\begin{titlepage}
\hypersetup{pageanchor=false}
\thispagestyle{empty}
\centering
\makeatletter
\vspace*{1.05in}
\begin{minipage}{0.94\textwidth}
\centering
\setstretch{1.15}
{\fontsize{20}{26}\selectfont\bfseries \@title\par}
\end{minipage}
\par\vspace{1.45in}
{\fontsize{14}{18}\selectfont
\def\and{\end{tabular}\hspace{0.60in}\begin{tabular}[t]{c}}%
\begin{tabular}[t]{c}\@author\end{tabular}\par}
\vspace{3.45in}
{\large \@date\par}
\makeatother
\vfill
\end{titlepage}

\begin{titlepage}
\thispagestyle{empty}
\begin{abstract}
We study nonpreemptive scheduling on a single machine with release dates, due dates, positive job weights, and a common processing time. The objective is to minimize total weighted tardiness. Although closely related equal-processing-time problems admit polynomial-time algorithms, the complexity of this problem has remained open in the literature since 2010. We prove that its decision version is strongly NP-complete, even when every job can meet its due date if processed immediately upon release. The reduction is from unweighted MAX-CUT and uses a quadratic number of jobs with polynomially bounded numerical data. Its main ingredient is a constructive normalization theorem that converts every sufficiently inexpensive feasible schedule into a binary choice for each graph vertex; after normalization, total weighted tardiness equals a constant minus a scaled cut value. We also give a deterministic polynomial-time phase-grid assignment algorithm for the shifted objective $\Phi=F+p\sum_jw_j$, where $F$ is total weighted tardiness. The algorithm enumerates at most $N$ release-date residues modulo $p$, solves one minimum-cost assignment problem for each residue, and returns the best phase-grid schedule. It runs in $O(N^5)$ arithmetic operations and achieves the tight ratio $3/2-1/(2N)$ for this algorithm. Because the added term $p\sum_jw_j$ is independent of how the jobs are scheduled, the shifted and original objectives have exactly the same optimal schedules. However, the approximation guarantee applies to the shifted objective; for the original objective, the analysis provides an additive bound. Thus, the paper both resolves the long-standing complexity question and provides a complementary worst-case guarantee for the phase-grid assignment algorithm.
\end{abstract}
\vspace{2em}
\noindent\textbf{Subject classifications:} Analysis of algorithms: computational complexity; production/scheduling: deterministic single machine.\\
\textbf{AMS classifications (2020):} Primary 90B35; Secondary 68Q25, 90C59.\\
%{\color{blue}\textbf{Area of review:} Optimization.}\\
\textbf{Keywords:} scheduling; weighted tardiness; strong NP-hardness; approximation algorithm; assignment.
\par\vspace{1em}

\end{titlepage}
\hypersetup{pageanchor=true}
\setcounter{page}{1}

\section{Introduction}\label{sec:intro}
We consider the problem of scheduling $N$  jobs  for processing on a single continuously available machine to minimize the total weighted tardiness where jobs have individual release dates and identical processing times. Using the three-field scheduling notation of \citet{graham1979},  where the first field describes the machine environment, the second specifies job restrictions, and the third gives the objective, this problem is denoted as $1\mid r_j,p_j=p\mid\sum_jw_jT_j$, where each job $j$ is associated with a release date $r_j$, an identical processing time $p_j=p$, a due date $d_j$, and a priority weight $w_j$. In a given schedule, the completion time of job $j$ is denoted as $C_j$ and its tardiness is defined as $T_j=\max\{C_j-d_j,0\}$. %In this problem, all the processing times are identical, i.e., $p_j=p$. 
Without loss of generality, it is assumed that $p$ and $w_j$'s are positive integers and $r_j$'s and $d_j$'s are nonnegative integers.
%{\color{red} The input consists of a positive integer common processing time $p$, positive integer weights $w_j$, and nonnegative integer release and due dates $r_j,d_j$, all encoded in binary. The number of jobs satisfies $N\ge1$.} 

Scheduling to meet due dates is a basic operations research problem. When work becomes available at different times, a decision maker must determine both which job to process next and whether to wait for a job that has not yet arrived. Total weighted tardiness measures the resulting delay by charging each job a penalty proportional to the time by which it misses its due date. The weights allow the penalty rates to differ across jobs.
 Equal processing times describe a natural restriction in which every job requires the same amount of machine capacity, while availability, urgency, and delay costs remain heterogeneous.
This restriction often changes the computational character of a scheduling problem. It removes the possibility of encoding an arbitrary collection of item sizes directly in processing times. It also permits a polynomially bounded set of candidate start times. %{\color{green} By \eqref{eq:leftshift}, every earliest start belongs to $\{r_j+kp:1\le j\le N,\ 0\le k<N\}$.} 
For several due-date objectives, these properties support dynamic programming or assignment algorithms. In particular, there are known polynomial-time algorithms for the total unweighted tardiness problem and the weighted number of tardy jobs problem  with release dates and equal processing times, i.e., $1\mid r_j,p_j=p\mid\sum_j T_j$ and $1\mid r_j,p_j=p\mid\sum_j w_jU_j$  \citep{baptiste1999,baptiste2000}, where $U_j=\mathbf 1_{\{C_j>d_j\}}$ is the tardy-job indicator. It is therefore natural to ask whether the  total weighted tardiness problem, i.e.,  $1\mid r_j,p_j=p\mid\sum_jw_jT_j$, is also tractable.

The answer has remained unresolved in the literature devoted to this problem. \citet[pp.~561 and~575]{akker2010} explicitly identify its computational complexity as open. \citet[p.~853]{gafarov2020} subsequently describe it as a minimal open problem in the scheduling classification and conjecture NP-hardness in their complexity discussion and concluding remarks. %These statements concern exactly the nonpreemptive single-machine model studied here, with arbitrary release dates and a common processing time. They provide the historical basis for the question addressed in this paper.

We prove that the decision problem of $1\mid r_j,p_j=p\mid\sum_jw_jT_j$ is strongly NP-complete  (Theorem~\ref{thm:main}). The result holds even when every due date is at least one processing time after the corresponding release, i.e., $d_j\ge r_j + p$ for every job. 
%so infeasibility of an individual job's due date is not the source of hardness. 
Every numerical parameter in the reduction, including the objective threshold, is polynomially bounded in the number of jobs. 
%{\color{green} Table~\ref{tab:magnitudes} and the polynomial-magnitude part of the proof of Theorem~\ref{thm:main} establish these bounds.} 
This resolves the published complexity open question and strengthens the NP-hardness conjecture of \citet{gafarov2020}.

The proof encodes unweighted MAX-CUT. Each vertex is represented by a block of equal-length jobs with two possible normalized behaviors. A pair of jobs at each incidence determines which weight remains for processing in a final portion of the schedule. Additional equal-length jobs enforce the timing of the blocks through the objective itself. The central argument shows constructively that any schedule meeting the reduction's cost bound can be converted, without increasing cost, into one of these normalized schedules  (Proposition~\ref{prop:normalization}). %Thus the proof accounts for schedules that interleave jobs unexpectedly, leave idle time, or initially use noninteger start times.

The normalized objective has an exact pairwise decomposition. Its coefficients can be chosen independently, which permits a graph interaction to be represented through job weights and due dates (Proposition~\ref{prop:coefficients}, especially \eqref{eq:cutcost}). This provides the needed coupling while all processing times remain identical. %The proof uses neither externally imposed machine interruptions nor precedence constraints. 
%Classical assignment, time-discretization, and weighted-order arguments retain their usual roles; 

The hardness result motivates the search for algorithms with provable
performance. A direct multiplicative guarantee for total weighted tardiness is
delicate because its optimum may be zero. We therefore analyze the shifted objective
\[
 \Phi(S)=F(S)+p\sum_jw_j,
\]
where the added term $p\sum_jw_j$ is independent of how the jobs are
scheduled. Consequently, $\Phi$ and the original objective $F$ have the same
ordering of schedules and the same minimizers. Our deterministic phase-grid assignment algorithm enumerates every
distinct release-date residue modulo $p$. For each residue, it rounds releases
upward to the associated grid, solves the resulting minimum-cost assignment
problem exactly, and retains the best schedule. The algorithm uses at most
$N$ assignments, runs in $O(N^5)$ arithmetic operations, and satisfies
\[
 \frac{\Phi_{\mathrm{alg}}}{\Phi^*}
 \le \frac32-\frac{\sum_jw_j^2}{2(\sum_jw_j)^2}
 \le \frac32-\frac1{2N}.
\]
The last bound is tight for the specified phase-grid assignment algorithm. Equivalently, the
analysis gives an explicit additive guarantee for $F$; it does not claim a
constant multiplicative approximation for the unshifted objective. The
contributions are therefore the construction and normalization that encode a
cut problem, together with the phase-enumeration analysis that converts
related grid and assignment ideas into a worst-case guarantee for a precisely
defined shifted objective.

Section~\ref{sec:literature} reviews the complexity and algorithmic results most directly surrounding the problem, including grid-based methods and shifted-objective analyses. %Section~\ref{sec:model} defines the model and records the time-domain facts needed for the reduction. 
Sections~\ref{sec:construction} and~\ref{sec:normalization} give the construction and normalization argument. Section~\ref{sec:identity} derives the cost identity and proves strong NP-completeness. Section~\ref{sec:approximation} presents the phase-grid assignment algorithm and its shifted-objective guarantee, and Section~\ref{sec:conclusion} concludes.

\section{Related literature}\label{sec:literature}
%We use the three-field scheduling notation of \citet{graham1979}. The first field describes the machine environment, the second specifies job restrictions, and the third gives the objective. We write $T_j=\max\{C_j-d_j,0\}$ for tardiness and $U_j=\mathbf 1_{\{C_j>d_j\}}$ for the tardy-job indicator. Equal durations are denoted by $p_j=p$. 
%The machine symbols $P_m$ and $P$ denote a fixed and an input number of identical parallel machines, respectively; they are unrelated to the lowercase common duration $p$. 
Table~\ref{tab:literature} summarizes the complexity results of the single-machine and parallel-machine problems related to the studied problem. We briefly discuss some of the problems and results given in the table and other related literature in the following subsections. 

\begin{table}[htbp]
\caption{Complexity results surrounding equal-processing-time weighted tardiness}\label{tab:literature}
\scriptsize 
\setstretch{1.1}
\begin{tabularx}{\textwidth}{@{}>{\raggedright\arraybackslash}p{0.35\textwidth}>{\raggedright\arraybackslash}p{0.32\textwidth}Y@{}}
\toprule
Problem  & Result & Source\\
\midrule
$1\mid\mid\sum_jT_j$ & Ordinary NP-hardness; pseudo-polynomial algorithm and FPTAS & \citet{lawler1977,lawler1982,duleung1990}\\[1ex]
$1\mid\mid\sum_jw_jT_j$ & Strongly NP-hard & \citet{lenstra1977}\\[1ex]
$1\mid r_j,p_j=p\mid\sum_jT_j$ & Polynomial, $O(N^7)$ algorithm & \citet{baptiste2000}\\[1ex]
$1\mid r_j,p_j=p\mid\sum_jw_jU_j$ & Polynomial, $O(N^7)$ algorithm & \citet{baptiste1999}\\[1ex]
$1\mid r_j,p_j=p\mid\sum_jU_j$ & Polynomial, $O(N^5)$ algorithm & \citet{chrobak2006}\\[1ex]
$1\mid r_j=a+k_jp,p_j=p\mid\sum_j w_jT_j$ & Polynomial, reduction to assignment   & \citet{gafarov2020} % Proposition~\ref{prop:grid}
\\[1ex]
$1\mid r_j,p_j=p\mid\sum_jw_jT_j$ & Explicitly left open; NP-hardness conjectured & \citet{akker2010,gafarov2020}\\[1ex]
{\bf Same problem, including $d_j\ge r_j+p$} & {\bf Strongly NP-hard} & {\bf Theorem~\ref{thm:main}} \\ 
$P_m\mid r_j,p_j=p\mid\sum_jT_j$, fixed $m$ & Polynomial, $O(N^{3m+4})$ algorithm & \citet{baptiste2000}\\[1ex]
$P \mid r_j,p_j=p\mid\sum_jT_j$, input machine count & Polynomial, reduction to LP  & \citet{brucker2005scheduling}\\[1ex]
$P_m\mid r_j,p_j=p\mid\sum_jw_jU_j$, fixed $m$ & Polynomial, $O(N^{6m+1})$ algorithm & \citet{bbkt2004}\\[1ex]
$P\mid r_j,p_j=p\mid\sum_jU_j$, input machine count & NP-hard & \citet{heegermolter2025}\\
\bottomrule
\end{tabularx}
\vskip 0.15cm
\parbox{\textwidth}{\footnotesize Note: $N$ is the number of jobs. %All rows are nonpreemptive. 
%In the first two rows, processing times are arbitrary and all jobs are initially available. ``Polynomial'' does not assert a strongly polynomial bound unless stated in the text. 
The symbol $P$ in the machine field denotes an arbitrary number of identical machines; $P_m$ denotes a given number ($m$) of identical machines; %$p_j=p$ denotes common job duration. 
%The open-status row records the cited historical statements. Additional reported LP-based special cases and the later qualification are discussed in Section~\ref{sec:literature}.
}
\end{table}

\subsection{Complexity results}
For arbitrary processing times and no release dates, the distinction between weighted and unweighted tardiness is well established. \citet{lawler1977} gives a pseudo-polynomial dynamic program for $1\mid\mid\sum_jT_j$, and \citet{duleung1990} prove NP-hardness in the ordinary sense. \citet{lawler1982} also gives a fully polynomial approximation scheme. %Together these results show why a proof of NP-hardness alone leaves an important algorithmic question open for a numerical scheduling problem.

The unrestricted weighted counterpart $1\mid\mid\sum_jw_jT_j$ is strongly NP-hard \citep{lenstra1977}. That result does not establish hardness under a common processing time: a hardness proof for a larger class does not automatically survive the equal-duration restriction. Indeed, when the processing times are identical, i.e., $p_j=p$, if all jobs are available at time zero, the completion positions can be fixed and the jobs assigned to  positions at costs $w_j\max\{kp-d_j,0\}$. Thus, $1\mid p_j=p \mid\sum_jw_jT_j$ reduces to an assignment problem and can be solved in polynomial time \citep{kuhn1955}. This is discussed in \citet{gafarov2020}. They also point out that a more general common-grid version of the problem, where $r_j = a + k_jp$, for a common nonnegative integer $a$ and a job dependent integer multiplier $k_j$, can  be modeled as an assignment problem.

\citet{baptiste1999} establishes strongly polynomial dynamic programs for minimizing the weighted number of tardy equal-length jobs with release dates, both without and with preemption. His nonpreemptive algorithm has running time $O(N^7)$ for $N$ jobs. \citet{baptiste2000} obtains polynomial algorithms for equal-length scheduling on a fixed number of identical parallel machines, including the total-tardiness objective. The single-machine result gives a polynomial algorithm for $1\mid r_j,p_j=p\mid\sum_jT_j$. Thus replacing tardiness amounts by tardy indicators, or replacing arbitrary weights by equal weights, yields tractable neighbors of our problem.

Subsequent work develops this algorithmic boundary in several directions. \citet{bbkt2004} give polynomial algorithms for ten equal-processing-time problems, one of which is the weighted number of tardy jobs problem on a fixed number of identical machines. \citet{carlier1981} gives a polynomial algorithm for deciding whether all equal-length jobs can meet their deadlines and proposes an extension to maximize throughput. \citet{chrobak2006} identify an error in that maximization extension and provide an $O(N^5)$ algorithm for the unweighted single-machine throughput problem. Their counterexample concerns the maximization extension; the feasibility result remains valid.

Throughput and tardiness use the same release-date and due-date information but value late jobs differently. In a throughput problem, the cost associated with a job declared late is independent of how late it is. The remaining jobs can therefore be appended after the chosen on-time subset without changing the selection objective. Under total weighted tardiness, their completion times still contribute to the objective. Our reduction exploits exactly this dependence: the jobs left for the final part of the schedule carry the interaction among the earlier binary choices.

For the weighted-tardiness problem studied here, \citet{akker2010} explicitly
leave the general computational complexity open, and \citet{gafarov2020}
subsequently conjecture NP-hardness and reiterate the open status. Theorem~\ref{thm:main}
settles this question by proving strong NP-hardness, even under
$d_j\ge r_j+p$. The result also resolves a related classification question in
multiple-due-date scheduling. In the equal-processing-time model MDS-EP of
\citet{kuehn2024}, every job has several due dates, each carrying a penalty if
missed. Their Corollary~2 gives a polynomial reduction from $\problem$ to
MDS-EP. Composing that reduction with Theorem~\ref{thm:main} proves that
MDS-EP is NP-hard; this observation does not assert preservation of strong
NP-hardness or approximation guarantees.

Parallel-machine classifications depend on the number of machines and on
preemption assumptions \citep{kravchenkowerner2011}. Equal-length total
tardiness is polynomial for a fixed number of identical machines
\citep{baptiste2000}, and equal-length weighted throughput is polynomial for a
fixed number of nonpreemptive machines \citep{bbkt2004}. When the number of
machines is part of the input, hardness is known for the preemptive
weighted-count problem \citep{bruckerkravchenko2006} and for the
nonpreemptive unweighted-count problem with release dates
\citep{heegermolter2025}; the latter resolves the question discussed by
\citet{sgall2012}. These parallel-machine results concern tardy-job counts or
unweighted tardiness under different machine assumptions, rather than the
single-machine weighted-tardiness objective considered here.

\subsection{Exact, heuristic, and grid-based algorithms}
Exact and heuristic procedures for the more general problem
$1\mid r_j\mid\sum_jw_jT_j$ also apply when processing times are equal.
\citet{akturk2000} develop a branch-and-bound method based on time-dependent
dominance rules and lower bounds, and \citet{akturk2001} use related dominance
information in constructive and local-search heuristics. These methods do not
exploit the common processing time through residue grids or assignment.
For $\problem$, \citet{akker2010} develop a time-indexed formulation over the
union of the release-generated grids, together with LP-based structural
arguments, branching, and column generation. They report tractable special
cases with common due dates, common weights, common releases, or equally
ordered release and due dates. Their formulation retains capacity interactions
among grids with different residues, whereas our algorithm separates the
residue phases and solves one assignment problem for each phase. Their
computational procedures address exact solution of instances, but favorable
behavior of the LP relaxation does not provide a polynomial worst-case bound
for the general problem.

For the equal-processing-time problem itself, \citet{gafarov2020} give two
exact algorithms with running time $O(N^3 2^N)$. One enumerates subsets of jobs
fixed to start at their release dates and solves an assignment problem for the
remaining jobs. This has the same high-level pattern as our method---enumerate
a structural choice, solve an assignment problem, and select the best
solution---but its enumeration is exponential and exact, whereas our
algorithm examines at most $N$ residue phases and is polynomial and
approximate. They also study dominance properties, special cases, and exact
search, and question the completeness of an LP-conversion argument in the
earlier work. Our hardness proof and approximation analysis do not rely on
that conversion argument.

Time grids and assignment reductions also occur in neighboring due-date
problems. \citet{dessouky1990} reduce identical-job scheduling without
individual release restrictions to linear assignment, obtaining an $O(N^3)$
algorithm for general nondecreasing costs and $O(N\log N)$ algorithms for
total tardiness and the weighted number of tardy jobs. \citet{zhang2025} use
generated completion times for bicriterion equal-length scheduling involving
total tardiness and the number of tardy jobs on uniform parallel machines,
with an $O(N^2)$ Pareto algorithm and $O(N\log N)$ algorithms for two
hierarchical variants. The polynomial dynamic programs summarized in the
preceding subsection likewise restrict attention to times of the form $r_i+kp$, but
optimize jointly over the union of all release-generated grids. In contrast,
our algorithm solves an assignment problem independently for each residue
class and then compares the phase optima. \citet{gafarov2020} show that
selecting one arbitrary grid can have
unbounded relative error for the original objective; Section~\ref{sec:approximation}
obtains a bounded additive loss by testing every release phase and evaluates
that loss under a shifted objective.

\subsection{Shifted-objective performance guarantees}
There is direct precedent for using a translated objective when tardiness can
vanish. For $P_m\mid d_j=d\mid\sum_jT_j$, \citet{kovalyov2002} construct
approximation schemes satisfying
\[
 \frac{T^A-T^*}{T^*+d}\le\varepsilon,
 \qquad\text{equivalently}\qquad
 \frac{T^A+d}{T^*+d}\le1+\varepsilon.
\]
Thus the common due date is added explicitly to the performance measure. For
fixed $m$, their two algorithm families run in
$O(N^{2m}/\varepsilon^{m-1})$ and $O(N^{m+1}/\varepsilon^m)$ time; for
$m=2$, the first improves to $O(N^3/\varepsilon)$.

\citet{kolliopoulos2006} study the broader family
$\alpha\mid\beta\mid\sum_jw_j(T_j+d_j)$ on single, identical-parallel,
uniformly related, and unrelated machines, with variants involving release
dates, precedence constraints, and preemption. Their models include, for
example, $P\mid r_j\mid\sum_jw_j(T_j+d_j)$,
$1\mid r_j,\mathrm{prec}\mid\sum_jw_j(T_j+d_j)$, and
$R\mid r_j\mid\sum_jw_j(T_j+d_j)$. They transfer approximation results from
weighted completion time and give an $O(N^3/\varepsilon)$ FPTAS for
$1\mid d_j=D\mid\sum_jw_j(T_j+d_j)$. Since
$\sum_jw_j(T_j+d_j)=\sum_jw_jT_j+\sum_jw_jd_j$, the modification preserves
the optimal schedules while changing the approximation ratio.

A related transformation is standard for maximum lateness. If
$D\ge\max_jd_j$ and $q_j=D-d_j$, then
$\max_j\{C_j+q_j\}=D+L_{\max}$. In the release-date/delivery-time
formulation, \citet{potts1980} give a $3/2$-approximation for
$1\mid r_j\mid L_{\max}$, and \citet{mastrolilli2003} develop PTASs for
$P\mid r_j\mid L_{\max}$ and its single-machine special case. These two results apply to the delivery completion objective $Q_{\max}$, which is equivalent to $D + L_{\max}$ under the displayed transformation. These studies
support the use of a schedule-independent shift as a performance measure, but
the guarantee must be identified as applying to the translated objective, not
as an ordinary multiplicative approximation for unshifted tardiness.

\subsection{Position of our result}
Our complexity result establishes strong NP-hardness while retaining one
machine, equal job lengths, arbitrary release dates, and the usual
weighted-tardiness objective. It strengthens the published NP-hardness
conjecture and separates this model from both the unweighted total-tardiness
problem and the weighted-throughput problem. Complementing that negative
result, Section~\ref{sec:approximation} combines residue-class enumeration and
assignment with a shifted-objective analysis. We found no previous algorithm
with this combination for $\problem$. The complexity conclusion does not
contradict the usefulness of existing exact methods on structured or
moderate-size instances, and the shifted guarantee does not imply a
multiplicative approximation for the original objective.

Several ingredients remain standard and are credited where used: the scheduling notation, candidate-time dominance, bipartite assignment, the weighted completion-time interchange rule of \citet{smith1956}, and the NP-completeness of unweighted MAX-CUT established by \citet{gjs1976}. The new contributions are the equal-length scheduling construction, its normalization for arbitrary bounded-cost witnesses, its exact encoding of independently specified positive edge coefficients, and the phase-averaging analysis of the phase-grid assignment algorithm.

\section{Reduction from unweighted MAX-CUT}\label{sec:construction}
% A8.4.1 / A9.5: conceptual construction roadmap
 In a feasible schedule, each job $j$, for $j=1, \ldots, N$, is executed in $[s_j,s_j+p)$, where $s_j\ge r_j$. The intervals are pairwise disjoint. Preemption is forbidden and idle time is permitted. The objective is
\[
\sum_{j=1}^{N} w_j\max\{0,C_j-d_j\},\qquad \mbox{where} ~C_j=s_j+p.
\]
The decision problem asks whether this cost is at most a given nonnegative integer $K$. All numerical input is binary encoded. Start times may be real. The common processing time is part of the instance, and can vary between instances.

The main strong NP-completeness result is stated and proved in Section~\ref{sec:identity}.

We first record the left-shifting argument underlying candidate-time discretization in equal-length scheduling \citep{bbkt2004,akker2010}. It also specifies why allowing real start times does not invalidate the later integral arguments.
For any feasible schedule, let $\pi$ be its job permutation. Define its earliest schedule by
\begin{equation}
\widehat C_0=0,\qquad
\widehat s_{\pi(k)}=\max\{r_{\pi(k)},\widehat C_{k-1}\},\qquad
\widehat C_k=\widehat s_{\pi(k)}+p.
\label{eq:leftshift}
\end{equation}
Here $k$ is the position in $\pi$, and $\widehat C_k$ is the completion time of job $\pi(k)$ in the earliest schedule.
For $k=1$, feasibility gives $s_{\pi(1)}\ge r_{\pi(1)}$, so \eqref{eq:leftshift} gives $\widehat s_{\pi(1)}\le s_{\pi(1)}$. For $k>1$, assume $\widehat C_{k-1}\le C_{\pi(k-1)}$. Feasibility gives $s_{\pi(k)}\ge r_{\pi(k)}$ and $s_{\pi(k)}\ge C_{\pi(k-1)}$; hence \eqref{eq:leftshift} gives $\widehat s_{\pi(k)}\le s_{\pi(k)}$. Adding $p$ proves $\widehat C_k\le C_{\pi(k)}$ for every $k$. The new schedule is feasible, has integer starts, and has no greater cost, because tardiness is nondecreasing in completion time. This argument applies to every feasible schedule, including one with noninteger starts, without any optimality assumption. In particular, the minimum over the finitely many earliest permutations attains the optimum. All uses of a one-unit lower bound on positive tardiness below explicitly require an integral schedule, obtained by \eqref{eq:leftshift} when necessary.

The source problem for the reduction is unweighted MAX-CUT \citep{gjs1976}. The reduction encodes each vertex by a block of equal-length jobs. For every vertex--edge incidence, the construction creates two alternative jobs. In the schedules that survive the normalization argument, exactly one of these two jobs is processed in the corresponding vertex block and the other is postponed to a final tail. This binary choice represents the side of the cut to which the vertex is assigned. The remaining jobs and parameters are chosen so that the block structure is enforced, while the weighted tardiness contributed by the tail decreases exactly when the endpoints of an edge are assigned to different sides of the cut.

Section~\ref{sec:construction-data} gives the full construction of the scheduling instance, including the auxiliary complete graph, the incidence indexing, the job data, the block boundaries, and the decision threshold. Section~\ref{sec:construction-bounds} records the parameter inequalities that are used later in the schedule argument. These inequalities ensure that the intended block structure is forced for every schedule whose cost is below the reference bound. Section~\ref{sec:normalization} then shows that such schedules can be transformed into a canonical form, and Section~\ref{sec:identity} evaluates the cost of canonical schedules to obtain the correspondence with MAX-CUT.

\begin{comment}
{\color{red}
Table~\ref{tab:roles} previews the roles of the quantities defined below.}

% A8.4.3: compact role-based notation table
\begin{table}[htbp]
{\color{red}
\centering
\caption{Roles of the construction quantities, defined in Section~\ref{sec:construction-data}; vertex modes are formalized in Section~\ref{sec:identity}}\label{tab:roles}
\small
\setstretch{1.1}
\begin{tabularx}{\textwidth}{@{}>{\raggedright\arraybackslash}p{0.30\textwidth}Y@{}}
\toprule
Quantities & Roles\\
\midrule
$n,m,M,E,N$ & Retained vertices; incidences per vertex; all incidences; auxiliary edges; scheduling jobs.\\[0.5ex]
$t_e,k_{v,e},g_{v,e}$ & Edge order; local incidence rank; global incidence rank.\\[0.5ex]
$\Delta_e,[\ell_e,h_e]$ & Interaction coefficient and its disjoint weight interval.\\[0.5ex]
$U,P,L,F_v$ & Identity-enforcing weight baseline; common duration; incidence due-date spacing; mode activation charge.\\[0.5ex]
$P_0,L_0,f_v,G,Q$ & Unscaled duration, spacing, charge coefficients, and reference cost; common integer scaling factor.\\[0.5ex]
$b_v,H$ & Vertex-block boundaries and the start of the tail.\\[0.5ex]
$\UB,K_*,x_v$ & Reference cost; decision threshold; binary vertex mode.\\
\bottomrule
\end{tabularx}
\par}
\end{table}
\end{comment}

\subsection{Source instance and job data}\label{sec:construction-data}
\label{par:preprocessing}

\begin{comment}
Take an unweighted simple graph $\mathcal G$ with edge set $\mathcal F$ and an integer $\kappa$. MAX-CUT asks whether some bipartition cuts at least $\kappa$ edges. {\color{green} If $\kappa\le0$, output a fixed YES scheduling instance: one job with $P=1$, $r=0$, $d=1$, $w=1$, and threshold zero. If $\kappa>|\mathcal F|$, output a fixed NO instance: two such jobs and threshold zero; its optimum is one. These cases also handle graphs with fewer than two vertices. In the remaining cases, delete isolated vertices and relabel the remaining vertices $1,\ldots,n$. This preserves every achievable cut value and ensures that the number $n$ used below is at most $2|\mathcal F|$, even if the original graph was specified by an edge list and a binary vertex count.} Hence assume
\[
2\le n\le2|\mathcal F|,\qquad 1\le\kappa\le|\mathcal F|.
\]

\paragraph{Original and added edges.}
Write $\mathcal G=(\mathcal V,\mathcal F)$, where $\mathcal V=\{1,\ldots,n\}$ is the retained vertex set. The construction uses the auxiliary complete graph $K_n=(\mathcal V,\mathcal E)$, with 
\end{comment}
Take an unweighted simple graph $\mathcal G=(\mathcal V,\mathcal F)$, where $\mathcal V=\{1,\ldots,n\}$ is the vertex set $\mathcal F$ the edge set, and an integer $\kappa$. MAX-CUT asks whether some bipartition cuts at least $\kappa$ edges.  To avoid trivial instances of the problem, we assume that there are no isolated vertices in $\mathcal G$, and $1\le \kappa\le |\mathcal F|$. % output a fixed YES scheduling instance: one job with $P=1$, $r=0$, $d=1$, $w=1$, and threshold zero. If $\kappa>|\mathcal F|$, output a fixed NO instance: two such jobs and threshold zero; its optimum is one. These cases also handle graphs with fewer than two vertices. 
 Hence, 
$2\le n\le2|\mathcal F|$. 

\paragraph{Auxiliary complete graph.}
 The construction uses the auxiliary complete graph $K_n=(\mathcal V,\mathcal E)$, with
\[
\mathcal E=\{\{a,b\}:1\le a<b\le n\}.
\]
Edges in $\mathcal F$ are called \emph{original edges}; edges in $\mathcal E\setminus\mathcal F$ are called \emph{added edges}. No vertices are added. The MAX-CUT objective counts only original edges, whereas every edge of the auxiliary complete graph participates in the scheduling construction. Define
\begin{equation}
\Gamma=\left\lfloor\frac{n^2}{4}\right\rfloor+1,\qquad
\Delta_e=1+\Gamma\,\mathbf{1}_{\{e\in\mathcal F\}}
\quad(e\in\mathcal E).
\label{eq:graphcoefficients}
\end{equation}
Thus $\Delta_e=\Gamma+1$ for an original edge and $\Delta_e=1$ for an added edge. Put $m=n-1$, $M=nm$, and $E=|\mathcal E|=M/2=\binom n2$. Each vertex will be associated with a \emph{vertex block}, defined below as a time interval.

\paragraph{Edge indices and weight intervals.}
% A8.4.3: concise edge-order convention
Write each unordered edge as $e=\{a,b\}$ with $a<b$, order $\mathcal E$ lexicographically by $t_e\in\{1,\ldots,E\}$, and define

\begin{equation}
\ell_e=t_e+\sum_{\substack{f\in\mathcal E\\t_f<t_e}}\Delta_f,\qquad
h_e=\ell_e+\Delta_e
\quad(e\in\mathcal E).
\label{eq:intervals}
\end{equation}
The sum is over all edges preceding $e$ in this order, including original and added edges. The intervals $[\ell_e,h_e]$ are disjoint, with a one-unit gap between successive intervals.  Their maximum endpoint is
\begin{equation}
V=E+\sum_{e\in\mathcal E}\Delta_e.
\label{eq:V}
\end{equation}
The scalar $V$ is distinct from the vertex set $\mathcal V$.

\paragraph{Vertex--edge incidences and their ranks.}
% A8.4.3: retain one precise incidence definition
For each vertex $v\in\mathcal V$, let
\[
\mathcal E_v=\{e\in\mathcal E:v\in e\}
=\{\{a,v\}:1\le a<v\}\cup\{\{v,b\}:v<b\le n\}.
\]
The set $\mathcal E_v$ contains all $m=n-1$ edges incident to $v$, with both smaller and larger other endpoints. A \emph{vertex--edge incidence} $(v,e)$ records a vertex $v$ and an edge $e\in\mathcal E_v$ containing it. An edge $e=\{a,b\}=\{b,a\}$ has two incidences, $(a,e)$ and $(b,e)$. There are $M=2E$ incidences in total. Each incidence will receive its own job pair, hence one edge gives two such job pairs, one at each endpoint.

At a fixed vertex $v$, order its incident edges by increasing other endpoint. For $e=\{a,b\}$ with $a<b$, the local incidence rank and global incidence rank are defined, respectively, 
\[
k_{v,e}=\begin{cases}b-1,&v=a,\\a,&v=b,\end{cases}
\qquad \mbox{and} ~~g_{v,e}=(v-1)m+k_{v,e}.
\]
Local ranks run from $1$ to $m$. The absence of a loop $\{v,v\}$ explains the two cases in the formula: for $v=a$, the other endpoint $b$ has rank $b-1$, whereas for $v=b$, the other endpoint $a$ has rank $a$. Equivalently, the local list is ordered by increasing other endpoint. Global ranks run from $1$ to $M$, with these local lists concatenated in increasing vertex order. Thus the vertex is ordered first and, within that vertex, the incident edges are ordered by increasing other endpoint (equivalently, by the lexicographic order of their canonical labels restricted to $\mathcal E_v$). %The edge index $t_e$ identifies one of $E$ edges; the global rank $g_{v,e}$ identifies one of $M$ incidences. 
Write $e(g)$ for the unique edge associated with incidence rank $g$. For each edge $e$, its two incidences share the same quantities $t_e,\Delta_e,\ell_e,h_e$.

% A8.4.3: replace the general pattern by a small example
Table~\ref{tab:ranks} displays the rank pattern for general $n$.

\begin{table}[htbp]
\centering
\caption{Local and global incidence ranks for general $n$}\label{tab:ranks}
\small
\setstretch{1.1}
\begin{tabularx}{\textwidth}{@{}cY>{\raggedright\arraybackslash}p{0.17\textwidth}>{\raggedright\arraybackslash}p{0.27\textwidth}@{}}
\toprule
Vertex $v$ & Incident edges in local order & Local ranks $k_{v,e}$ & Global ranks $g_{v,e}$\\
\midrule
$1$ & $\{1,2\},\{1,3\},\ldots,\{1,n\}$ & $1,2,\ldots,n-1$ & $1,2,\ldots,n-1$\\[0.6ex]
$2$ & $\{1,2\},\{2,3\},\ldots,\{2,n\}$ & $1,2,\ldots,n-1$ & $n,n+1,\ldots,2n-2$\\[0.6ex]
$\vdots$ & $\vdots$ & $\vdots$ & $\vdots$\\[0.6ex]
$i$ & $\{1,i\},\ldots,\{i-1,i\},\{i,i+1\},\ldots,\{i,n\}$ & $1,2,\ldots,n-1$ & $(i-1)(n-1)+1,\ldots,i(n-1)$\\[0.6ex]
$\vdots$ & $\vdots$ & $\vdots$ & $\vdots$\\[0.6ex]
$n$ & $\{1,n\},\{2,n\},\ldots,\{n-1,n\}$ & $1,2,\ldots,n-1$ & $(n-1)^2+1,\ldots,M$\\
\bottomrule
\end{tabularx}
\end{table}

% A8.4.3: remove repeated equivalent rank descriptions (retained in blue)
For every vertex $v$, ordering the other endpoints increasingly while omitting $v$ gives the same set of local ranks $\{1,\ldots,n-1\}$. Equivalently, for a neighbor $u\ne v$,
\[
k_{v,\{u,v\}}=
\begin{cases}
u,&u<v,\\
u-1,&u>v.
\end{cases}
\]
In the general row of Table~\ref{tab:ranks}, ranks $1,\ldots,i-1$ correspond to the lower-numbered neighbors of $i$, and ranks $i,\ldots,n-1$ correspond to its higher-numbered neighbors. Global ranks depend on the vertex: the incidences of vertex $v$ occupy the consecutive block
\[
\{(v-1)(n-1)+1,\ldots,v(n-1)\}.
\]
These $n$ blocks partition $\{1,\ldots,M\}$. 

\begin{comment}
% A8.4.3: concrete edge-versus-incidence example
\begin{table}[htbp]
{\color{red}
\centering
\caption{Incidence ranks for $n=3$, where $m=2$, $E=3$, and $M=6$}\label{tab:ranks-example}
\small
\setstretch{1.1}
\begin{tabularx}{\textwidth}{@{}cYccc@{}}
\toprule
Vertex & Incident edges in local order & Edge indices & Local ranks & Global ranks\\
\midrule
$1$ & $\{1,2\},\ \{1,3\}$ & $1,2$ & $1,2$ & $1,2$\\[0.7ex]
$2$ & $\{1,2\},\ \{2,3\}$ & $1,3$ & $1,2$ & $3,4$\\[0.7ex]
$3$ & $\{1,3\},\ \{2,3\}$ & $2,3$ & $1,2$ & $5,6$\\
\bottomrule
\end{tabularx}
\medskip
\parbox{\textwidth}{\footnotesize For example, edge $\{1,3\}$ has index $t_{\{1,3\}}=2$, but its incidence ranks are $g_{1,\{1,3\}}=2$ and $g_{3,\{1,3\}}=5$. Its job pairs are $(A_2,B_2)$ and $(A_5,B_5)$.}
\par}
\end{table}
\end{comment}

\paragraph{Charge coefficients and scaling.}
Define
\[
P_0=2M,\qquad L_0=2,\qquad q_e=M-2t_e\quad(e\in\mathcal E),
\]
and, for every vertex $v\in\mathcal V$, let
\begin{equation}
f_v=\sum_{e\in\mathcal E_v}\Delta_e
 \left[P_0\left(q_e+\frac32\right)+L_0(g_{v,e}-1)\right].\label{eq:fees}
\end{equation}
% A8.4.5: conceptual explanation of the fee formula
The coefficients are chosen so that, in the cost calculation for the normalized schedules, all terms depending on only one endpoint choice cancel out. What remains is a constant baseline together with a pairwise saving that occurs exactly when the two endpoint choices differ. Proposition~\ref{prop:coefficients} below verifies this cancellation explicitly after scaling.

Every $q_e$ is a nonnegative even integer. This, together with the fact that $P_0$ is even and $g_{v,e}$ is a positive integer, implies that all $f_v$ are positive integers. Write
\[
f_{\min}=\min_{1\le v\le n}f_v
\]
and set
\begin{align}
U&=1+V(2M+1)P_0+\sum_{v=1}^{n} f_v,\label{eq:U}\\
G&=\sum_{\substack{e=\{a,b\}\in\mathcal E\\a<b}}(U+h_e)
 \left[P_0(2q_e+3)+L_0(g_{a,e}+g_{b,e}-2)\right],\label{eq:G}\\
Q&=\left\lfloor\frac{G}{2P_0f_{\min}}\right\rfloor+1.\label{eq:Q}
\end{align}
In \eqref{eq:G}, the sum defining $G$ includes every original and added edge of $\mathcal E$ exactly once. Finally, let
\begin{equation}
P=QP_0,\quad L=QL_0,\quad F_v=Qf_v\ (v\in\mathcal V),\quad
\FS=\sum_{v=1}^{n}F_v,\quad \UB=QG.
\label{eq:actual}
\end{equation}
The sum defining $\FS$ includes one charge weight for every vertex. The parameters are defined in the displayed order; no optimum or unknown subset is used in calculating them.

\paragraph{Block boundaries and jobs.}
The $n$ vertex blocks are time intervals and therefore have $n+1$ boundary times. Define
\begin{equation}
b_s=(s-1)((m+2)P+1)\quad(1\le s\le n+1),\qquad H=b_{n+1}.
\label{eq:blocks}
\end{equation}
Vertex $v\in\mathcal V$ has block $[b_v,b_{v+1})$. Thus $b_1=0$, and $b_{n+1}=H$ is the end of the last vertex block, not an additional vertex. It is also the due date of the last guard and the starting time of the normalized tail.

For each incidence $g=g_{v,e}=(v-1)m+k_{v,e}$, introduce two jobs $A_g,B_g$. Add a charge job $Z_v$ and a guard job $R_v$ for each vertex $v\in\mathcal V$. Every job has processing time $P$. Their release dates, due dates, and weights are specified in Table~\ref{tab:jobs}. In particular, the two jobs $A_g,B_g$ associated with incidence $g$ have common due date $d_g=H-(g-1)L$. There are $2M+2n=2n^2$ jobs.

\begin{table}[htbp]
\centering
\caption{Job data in the reduction; every job has processing time $P$}\label{tab:jobs}
\setstretch{1.15}
\small
\begin{tabular}{lllll}
\toprule
Job & Release & Due date & Weight & Number\\
\midrule
$A_g$ & $b_v+(k-1)P$ & $H-(g-1)L$ & $U+\ell_e$ & $M$\\
$B_g$ & $b_v+(k-1)P+1$ & $H-(g-1)L$ & $U+h_e$ & $M$ \\
$Z_v$ & $b_v+mP$ & $b_v+(m+1)P$ & $F_v$ & $n$\\
$R_v$ & $b_v+(m+1)P+1$ & $b_{v+1}$ & $\UB+1$ & $n$\\
\bottomrule
\end{tabular}
\vskip 0.15cm
\parbox{0.65\textwidth}{\footnotesize In the $A_g$ and $B_g$ rows, $v\in\mathcal V$, $e\in\mathcal E_v$, $k=k_{v,e}\in\{1,\ldots,m\}$, and $g=g_{v,e}=(v-1)m+k\in\{1,\ldots,M\}$. Each incidence supplies one $A_g$-job and one $B_g$-job. In the charge and guard rows, $v=1,\ldots,n$.}
\end{table}

The target threshold is
\begin{equation}
K_* = \UB-\frac P2\Gamma\kappa.
\label{eq:threshold}
\end{equation}
All job data and the bound $\UB$ depend only on the graph. The MAX-CUT target $\kappa$ enters the nontrivial scheduling construction solely through the decision threshold $K_*$.

\subsection{Bounds needed for the schedule argument} \label{sec:construction-bounds}
\begin{lemma}[Parameter bounds]\label{lem:parameter-bounds}
The parameters defined in Section~\ref{sec:construction-data} satisfy
\begin{align}
P&=ML>1,\label{eq:length}\\
\left(\min_{1\le i\le n}F_i\right)(2P+1)&>\UB,\label{eq:chargeforce}\\
UL&>V(2M+1)P+\FS,\label{eq:exchangebound}\\
PU&>\FS.\label{eq:addbound}
\end{align}
\end{lemma}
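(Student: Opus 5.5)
Each of the four inequalities follows by direct substitution of the definitions \eqref{eq:actual}, \eqref{eq:U} and \eqref{eq:Q}, using $P=QP_0$, $L=QL_0$, $F_v=Qf_v$ and $\UB=QG$. Since all four scale linearly in $Q$ (apart from the constant $+1$ and the factor $2P+1$), the plan is to reduce each to an inequality among the unscaled quantities $P_0,L_0,f_v,G,U$. Then either the definition of $Q$ or that of $U$ closes the argument. I treat them in the order stated.

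For \eqref{eq:length}, compute $ML=MQL_0=2MQ=QP_0=P$. Since $M=n(n-1)\ge2$ (because $n\ge2$) and $Q\ge1$, we get $P\ge4>1$.

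For \eqref{eq:chargeforce}, let $F_{\min}=\min_i F_i=Qf_{\min}$. Then $F_{\min}(2P+1)>2PF_{\min}=2Q^2P_0f_{\min}$. By \eqref{eq:Q}, $Q>G/(2P_0f_{\min})$, hence $2QP_0f_{\min}>G$. Multiplying by $Q$ gives $2Q^2P_0f_{\min}>QG=\UB$. The only point requiring care is that $f_{\min}>0$, which the paper has already noted (all $f_v$ are positive integers, since every $\Delta_e\ge1$, $q_e\ge0$ and the bracket is at least $\tfrac32P_0>0$).

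For \eqref{eq:exchangebound}, divide by $Q$. The inequality becomes $UL_0>V(2M+1)P_0+\sum_v f_v$, i.e.\ $2U>V(2M+1)P_0+\sum_v f_v$. By \eqref{eq:U}, $U$ alone already exceeds the right-hand side by $1$, so $2U>U>$ RHS. For \eqref{eq:addbound}, divide by $Q$ in the same way. It becomes $P_0U\cdot Q\ge P_0U>\sum_v f_v$. This holds because $P_0\ge1$ and $U>\sum_v f_v$ by \eqref{eq:U}, with $V(2M+1)P_0\ge0$.

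None of these steps is a real obstacle. The only place needing attention is \eqref{eq:chargeforce}, where one must use the strict inequality $Q>G/(2P_0f_{\min})$ furnished by the floor-plus-one definition, together with positivity of $f_{\min}$. I would state that positivity explicitly before using it.
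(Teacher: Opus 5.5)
Your proposal is correct and follows essentially the same route as the paper. Both substitute $P=QP_0$, $L=QL_0$, $F_v=Qf_v$, $\UB=QG$, close \eqref{eq:chargeforce} with the strict inequality $Q>G/(2P_0f_{\min})$ from the floor-plus-one definition, and close \eqref{eq:exchangebound} and \eqref{eq:addbound} with the definition \eqref{eq:U} of $U$. (Two small notes: in your last step the stray factor in ``$P_0U\cdot Q\ge P_0U$'' should simply be dropped after dividing by $Q$; and the paper also records the exact margins $Q(\delta+f_{\min})$ and $Q(U+1)$, which it reuses later but which the lemma itself does not require.)
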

\begin{proof}
Recall that $P_0=2M$, $L_0=2$, $Q\ge1$, and $M=n(n-1)\ge2$. Therefore, from (\ref{eq:actual}),
\[
P=QP_0=2QM=MQL_0=ML>1,
\]
which proves \eqref{eq:length}.

For \eqref{eq:chargeforce}, Equation~\eqref{eq:Q} and $f_{\min}>0$ give
\begin{equation} \label{eq:Qdefined}
Q=\left\lfloor\frac{G}{2P_0f_{\min}}\right\rfloor+1
>\frac{G}{2P_0f_{\min}}.
\end{equation}
Hence, using $\min_{1\le i\le n}F_i=Qf_{\min}$, $P=QP_0$, and $\UB=QG$ from the scaling definitions \eqref{eq:actual},
\[
\begin{aligned}
\left(\min_{1\le i\le n}F_i\right)(2P+1)
&=Qf_{\min}(2QP_0+1)\\
&>2Q^2P_0f_{\min}\\
&>QG, \quad {\rm from} \; (\ref{eq:Qdefined}) \\
&=\UB.
\end{aligned}
\]
 Let $\delta=2P_0f_{\min}Q-G$. By the definition of $Q$ given in \eqref{eq:Q}, we have $1\le \delta\le 2P_0f_{\min}$. The exact margin is
\begin{equation}
\begin{aligned}
\left(\min_{1\le i\le n}F_i\right)(2P+1)-\UB
&=Q\bigl[2P_0f_{\min}Q+f_{\min}-G\bigr]\\
&=Q(\delta+f_{\min})>0.
\end{aligned}
\label{eq:charge-margin}
\end{equation}

For \eqref{eq:exchangebound}, Equation~\eqref{eq:U} gives
\[
V(2M+1)P_0+\sum_{v=1}^{n}f_v=U-1.
\]
Using $L=2Q$, $P=QP_0$ and $\FS=Q\sum_{i=1}^{n}f_i$
from (\ref{eq:actual}), we obtain
\[
V(2M+1)P+\FS=Q(U-1)<2QU=UL.
\]
The exact margin, which is used below, is
\begin{equation}
\begin{aligned}
UL-V(2M+1)P-\FS
&=Q\left[2U-V(2M+1)P_0-\sum_{i=1}^{n} f_i\right]\\
&=Q\left[2+V(2M+1)P_0+\sum_{i=1}^{n} f_i\right]\\
&=Q(U+1)>0.
\end{aligned}
\label{eq:exchange-margin}
\end{equation}

Finally, Equation~\eqref{eq:U} gives $U>\sum_{i=1}^{n}f_i$. Since $P_0\ge1$ and $Q>0$,
\[
PU=QP_0U\ge QU>Q\sum_{i=1}^{n}f_i=\FS,
\]
from (\ref{eq:U}) and (\ref{eq:actual}), and proving \eqref{eq:addbound}. Equivalently, the exact margin is $PU-\FS=Q(P_0U-\sum_{i=1}^{n}f_i)>0$. %These calculations use positivity and the parameter definitions; they do not use the graph-specific formula \eqref{eq:graphcoefficients}.
\end{proof}

% A8.4.3: avoid a further repetition of the rank construction
The latest release among the $A/B$-jobs is that of $B_M$, whose local rank is $m$ at vertex $n$.  By the $B_g$ row of Table~\ref{tab:jobs} and the block definition \eqref{eq:blocks}, its release date is
\begin{equation}
r_{\max}^{AB}
=b_n+(m-1)P+1
=H-3P<H.
\label{eq:abrelease}
\end{equation}
In particular, every $A/B$-job is available at time $H$, and the latest such release plus $P$ is $H-2P$.  By Table~\ref{tab:jobs}, the earliest due date occurs at $g=M$; using $(M-1)L=P-L$ from \eqref{eq:length}, it is
\begin{equation}
H-(M-1)L=H-P+L>H-P.
\label{eq:duebound}
\end{equation}
Thus $d_j\ge r_j+P$ for every $A/B$-job, and equality holds for the charge and guard jobs.  Indeed, \eqref{eq:abrelease} and \eqref{eq:duebound} give
$r_j+P\le H-2P<H-P<d_j$ for every $A/B$-job, while the equalities for
charge and guard jobs follow from Table~\ref{tab:jobs} and
\eqref{eq:blocks}.

 We next give an explicit reference schedule with total weighted tardiness
$\overline K$. This schedule is not claimed to be optimal; its role is to
show that the constructed instance has a feasible schedule of cost at most
$\overline K$, and hence  the class of schedules considered in the
normalization argument is nonempty. It also gives the baseline cost from
which the later cut-dependent savings are measured.

Let $\sigma$ be a permutation of $\{1,\ldots,M\}$ such that
\[
w_{B_{\sigma(1)}}\ge\cdots\ge w_{B_{\sigma(M)}},
\]
with ties broken arbitrarily. Its job sequence is
\[
\begin{aligned}
(&A_1,\ldots,A_m,Z_1,R_1,\;
A_{m+1},\ldots,A_{2m},Z_2,R_2,\;\ldots,\\
&A_{(n-1)m+1},\ldots,A_M,Z_n,R_n,\;
B_{\sigma(1)},\ldots,B_{\sigma(M)}).
\end{aligned}
\]
The associated start times are
\[
\begin{aligned}
s_{A_{(v-1)m+k}}&=b_v+(k-1)P
&& (1\le v\le n,\ 1\le k\le m),\\
s_{Z_v}&=b_v+mP
&& (1\le v\le n),\\
s_{R_v}&=b_v+(m+1)P+1
&& (1\le v\le n),\\
s_{B_{\sigma(t)}}&=H+(t-1)P
&& (1\le t\le M).
\end{aligned}
\]
Thus there is one unit of idle time between the completion of each $Z_v$ and the start of $R_v$. Each guard finishes at $b_{v+1}$, and the last guard finishes at $H$. All $A$-jobs, charge jobs, and guard jobs are on time. The final suffix after time $H$ consists of the $B$-jobs; informally, this is
the tail of the reference schedule. 
For each edge $e=\{a,b\}$, the two tail jobs are
$B_{g_{a,e}}$ and $B_{g_{b,e}}$, both of weight $U+h_e$
(Table~\ref{tab:jobs}). The weight intervals increase with $t_e$
by \eqref{eq:intervals}, so exactly $2(E-t_e)=M-2t_e=q_e$
tail jobs have greater weight. Hence these two jobs occupy tail
positions $q_e+1$ and $q_e+2$, with completion times
$H+(q_e+1)P$ and $H+(q_e+2)P$, in either order.
Their due dates are $H-(g_{a,e}-1)L$ and $H-(g_{b,e}-1)L$.
Both jobs complete after $H$ and have due dates at most $H$, so
$T_j=C_j-d_j$ for each of them. Their total tardiness is therefore
\[
\begin{aligned}
& w_{B_{g_{a,e}}}T_{B_{g_{a,e}}}
  +w_{B_{g_{b,e}}}T_{B_{g_{b,e}}}\\
&\quad=(U+h_e)\Bigl[2H+(2q_e+3)P
  -\bigl(2H-L(g_{a,e}+g_{b,e}-2)\bigr)\Bigr]\\
&\quad=(U+h_e)\bigl[P(2q_e+3)+L(g_{a,e}+g_{b,e}-2)\bigr].
\end{aligned}
\]
The first equality subtracts the sum of their due dates from the sum
of their completion times; their order is immaterial because their
weights are equal.

% A8.4.4: consolidate reference-bound safeguards without removing the witness

Using (\ref{eq:G}) and (\ref{eq:actual}) gives
\[
\begin{aligned}
W_{\mathrm{ref}}
&=\sum_{\substack{e=\{a,b\}\in\mathcal E\\a<b}}
 (U+h_e)\bigl[P(2q_e+3)+L(g_{a,e}+g_{b,e}-2)\bigr]\\[0.6ex]
&=\sum_{\substack{e=\{a,b\}\in\mathcal E\\a<b}}
 (U+h_e)\bigl[QP_0(2q_e+3)+QL_0(g_{a,e}+g_{b,e}-2)\bigr]\\[0.6ex]
&=\sum_{\substack{e=\{a,b\}\in\mathcal E\\a<b}}
 Q(U+h_e)\bigl[P_0(2q_e+3)+L_0(g_{a,e}+g_{b,e}-2)\bigr]\\[0.6ex]
&=Q\sum_{\substack{e=\{a,b\}\in\mathcal E\\a<b}}
 (U+h_e)\bigl[P_0(2q_e+3)+L_0(g_{a,e}+g_{b,e}-2)\bigr]\\[0.6ex]
&=QG\\
&=\overline K.
\end{aligned}
\]
This complete schedule processes every job exactly once and proves that the optimal objective value of the constructed instance $\OPT\le\UB$. The smaller value $K_*=\UB-(P/2)\Gamma\kappa$ is the decision threshold, not the cost of this reference schedule and not necessarily the optimum. Cut-dependent schedules may meet or improve on $K_*$ by choosing different modes at different vertices. %The value $K_*$ is the separate decision threshold, not the cost of the reference schedule.

\section{Schedule normalization}\label{sec:normalization}
 This section proves that every feasible schedule of cost at most $\UB$ can be
transformed, in polynomial time and without increasing cost, into a highly
structured canonical schedule. Consequently, for the decision threshold $K_*\le\UB$, it suffices to analyze
canonical schedules.  We first determine the positions of the guard and charge jobs and characterize the feasible sets of early $A/B$-jobs. We then use exchange arguments to obtain exactly one early job from each pair $(A_g,B_g)$ and to enforce a consistent $A$- or $B$-mode at every vertex. These steps culminate in a normalization theorem showing that every bounded-cost schedule, including one with noninteger start times, can be transformed in polynomial time and without increasing cost into a canonical schedule,  as established in Proposition~\ref{prop:normalization}. This structure will allow Section~\ref{sec:identity} to encode a schedule by one binary choice per vertex and calculate its cost edge by edge. It applies to every schedule meeting the decision threshold because $K_*\le\UB$ by \eqref{eq:threshold}.

Figure~\ref{fig:proof-overview} summarizes the normalization steps and their
proof references, followed by the cost identity and threshold equivalence.

% Embedded proof overview (formerly ProofOverviewFigure.tex).
\begin{figure}[p]
\centering
\begingroup
\setstretch{1}
\color{black}
% Slight scaling keeps the overview and its caption within the text height.
\begin{tikzpicture}[
  x=1cm,y=1cm,scale=.97,transform shape,
  every node/.style={font=\fontsize{9}{10.7}\selectfont},
  po box/.style={draw=black!65,line width=.5pt,rounded corners=1.5mm,
    fill=black!2,text width=10.05cm,align=center,inner xsep=2mm,
    inner ysep=1.8mm},
  po ref/.style={anchor=north west,text width=5.08cm,align=left,
    inner sep=0pt,font=\fontsize{8.5}{10.2}\selectfont},
  po flow/.style={-{Latex[length=1.6mm,width=1.2mm]},line width=.65pt},
  po mode/.style={po box,text width=4.51cm,minimum height=1.72cm},
  po result/.style={po box,fill=black!7,line width=.8pt},
  po cost/.style={po flow,dash pattern=on 1pt off 1.2pt}
]
\coordinate (po legend) at (11.05,0);
\node[anchor=north west,inner sep=0pt,font=\bfseries\fontsize{12}{14}\selectfont]
  at (0,0) {Overview of the proof};
\node[anchor=north west,inner sep=0pt,font=\bfseries\fontsize{10}{12}\selectfont]
  at (11.05,0) {Legend: proof references};
\node[anchor=north west,inner sep=0pt,text width=10.45cm,align=left,
  font=\fontsize{8.5}{10.2}\selectfont]
  at (0,-.48) {Steps 1--7 preserve feasibility and never increase total weighted tardiness.};

\node[po box,fill=white,minimum height=.90cm,anchor=north]
  (po input) at (5.225,-1.02)
  {\textbf{Any feasible schedule with cost $\le\UB$}\\
   Start times may be noninteger; the job permutation is sufficient.};
\node[po ref] at ($(po input.north -| po legend)+(0,-.02)$)
  {\textbf{Input schedule and normalization bound.} \\ 
  $K_*\le\UB$: \eqref{eq:threshold}.};

\node[po box,below=2.6mm of po input,minimum height=1.24cm] (po one)
  {\textbf{1\quad Left shift the given permutation}\\[1pt]
   $\widehat s_{\pi(k)}=\max\{r_{\pi(k)},\widehat C_{k-1}\}$,\quad
   $\widehat C_k=\widehat s_{\pi(k)}+P$,\quad $\widehat C_0=0$.\\
   Every start becomes integral; no completion time increases.};
\node[po ref] at ($(po one.north -| po legend)+(0,-.05)$)
{\textbf{1\quad Proposition~\ref{prop:normalization}, proof}\\
Applied through Lemma~\ref{lem:exchange}; the earliest-permutation
recurrence is \eqref{eq:leftshift}.};

\node[po box,below=2.6mm of po one,minimum height=1.26cm] (po two)
  {\textbf{2\quad Fix the guard and charge positions}\\[1pt]
   $s_{R_i}=r_{R_i}=b_i+(m+1)P+1$.\\
   $s_{Z_i}\in\{b_i+mP,\ b_i+mP+1\}$;\quad
   $w_{Z_i}T_{Z_i}\in\{0,F_i\}$.};
\node[po ref] at ($(po two.north -| po legend)+(0,-.05)$)
  {\textbf{2\quad Lemma~\ref{lem:controls}: Control placement}\\
   Integrality and the cost bound force these positions.
   Each early window holds at most $m$ jobs.};

\node[po box,below=2.6mm of po two,minimum height=1.26cm] (po three)
  {\textbf{3\quad Compact the tail before any exchanges}\\[1pt]
   Pack the tail consecutively from $H=b_{n+1}$ in its current order.\\
   $H+P\le C\le H+2MP$,\quad
   $P\le C-d_g\le(2M+1)P$.};
\node[po ref] at ($(po three.north -| po legend)+(0,-.05)$)
  {\textbf{3\quad Compaction before Lemma~\ref{lem:matching}}\\
   That paragraph proves \eqref{eq:tailbound}; \eqref{eq:abrelease} puts
   every tail release before $H$.};

\node[po box,below=2.6mm of po three,minimum height=1.55cm] (po four)
  {\textbf{4\quad Repair early counts, from the largest defective rank}\\[1pt]
   Let $c_g=|\{A_g,B_g\}\cap S|$, where $S$ is the early set.\\
   Choose the largest $g$ with $c_g\ne1$; necessarily $c_g=0$.\\
   Insert $A_g$ early, remove at most one lower-rank early job,\\
   and rematch the early set. Repeat until $c_g=1$ for every $g$.};
\node[po ref] at ($(po four.north -| po legend)+(0,-.03)$)
  {\textbf{4\quad Lemma~\ref{lem:exchange}: One early job per incidence}\\
   Lemma~\ref{lem:matching} supplies matching via the suffix condition
   \eqref{eq:suffix}:\\
   $\sum_{h=g}^{M}c_h\le M-g+1$.\\
   At most $M$ repairs; cost decreases.};

\node[po box,below=2.6mm of po four,minimum height=1.43cm] (po five)
  {\textbf{5\quad Recover each incidence's own window and position}\\[1pt]
   All $M$ early slots are filled. Their rank permutation satisfies\\
   $\rho_g\le g$, hence $\rho_g=g$ for every $g$. Left shift again:\\
   $s_k=b_i+(k-1)P+\varepsilon_k$,\quad
   $\varepsilon_k=\max\{\varepsilon_{k-1},z_k\}$,\quad $\varepsilon_0=0$.};
\node[po ref] at ($(po five.north -| po legend)+(0,-.04)$)
  {\textbf{5\quad Lemma~\ref{lem:consistency}, proof: rank forcing}\\
   Capacity and releases force $\rho_g=g$.\\
   Here $s_k$ is the local early start, $z_k=0$ for a selected $A$
   and $z_k=1$ for a selected $B$;
   $\varepsilon_k\in\{0,1\}$.};

\node[po box,below=2.6mm of po five,minimum height=1.22cm] (po six)
  {\textbf{6\quad Make each vertex consistent}\\[1pt]
   Mixed $A/B$ vertex $\longrightarrow$ all-$B$ vertex.\\
   Swap each early $A_g$ with its tail $B_g$; pure vertices keep their mode.};
\node[po ref] at ($(po six.north -| po legend)+(0,-.04)$)
  {\textbf{6\quad Lemma~\ref{lem:consistency}: Vertex consistency}\\
   A mixed vertex already pays $F_i$. The charge cost stays fixed,
   while the tail saving in \eqref{eq:mixedsaving} is strictly positive.};

\node[po mode,anchor=north] (po a)
  at ($(po six.south)+(-2.67,-.36)$)
  {\textbf{$A$-mode: $x_i=1$}\\[2pt]
   $A_g$ early; $B_g$ in the tail.\\
   $s_k=b_i+(k-1)P$.\\
   Charge cost $0$; $T_{Z_i}=0$.\\
   One unit idle before $R_i$.};
\node[po mode,anchor=north] (po b)
  at ($(po six.south)+(2.67,-.36)$)
  {\textbf{$B$-mode: $x_i=0$}\\[2pt]
   $B_g$ early; $A_g$ in the tail.\\
   $s_k=b_i+(k-1)P+1$.\\
   Charge cost $F_i$; $T_{Z_i}=1$.\\
   One unit idle after $b_i$.};
\node[po ref] at ($(po a.north -| po legend)+(0,-.05)$)
  {\textbf{Two resulting modes, independently by vertex.}\\
   In these boxes, $g=(i-1)m+k$ and $1\le k\le m$.
   Both modes have\\
   $s_{R_i}=b_i+(m+1)P+1$\\
   and $C_{R_i}=b_{i+1}$: the guard resets the next block.};

\coordinate (po merge) at ($(po a.south)!.5!(po b.south)$);
\node[po result,anchor=north,minimum height=1.50cm]
  (po seven) at ($(po merge)+(0,-.36)$)
  {\textbf{7\quad Compact and sort the tail: canonical schedule}\\[1pt]
   Exactly $M$ tail jobs run consecutively from $H$ in nonincreasing $w_j$.\\
   Each vertex now has one bit $x_i$; its charge contributes $F_i(1-x_i)$.\\
   Early $A/B$-jobs and all guards contribute zero.};
\node[po ref] at ($(po seven.north -| po legend)+(0,-.04)$)
  {\textbf{7\quad Proposition~\ref{prop:normalization}: Normalization}\\
   For the fixed tardy tail set, minimize $\sum w_jC_j$ by sorting weights.
   The entire transformation is polynomial and does not assume optimality.};

\node[po result,below=3mm of po seven,minimum height=1.43cm] (po eight)
  {\textbf{8\quad Evaluate the canonical schedule by its cut}\\[1pt]
   $\displaystyle C(x)=\UB-\frac P2
     \sum_{1\le a<b\le n}\Delta_{\{a,b\}}|x_a-x_b|$.\\[2pt]
   $C(x)\le K_*\ \Longleftrightarrow\ c_{\mathcal G}(x)\ge\kappa$
   \quad for the graph-specific coefficients.};
\node[po ref] at ($(po eight.north -| po legend)+(0,-.04)$)
  {\textbf{8\quad Proposition~\ref{prop:coefficients}: Cost identity}\\
   Equation~\eqref{eq:cutcost}; then \textbf{Theorem~\ref{thm:main}} gives
   the threshold equivalence. Polynomial numerical bounds and membership
   in NP complete strong NP-completeness.};

\foreach \a/\b in {input/one,one/two,two/three,three/four,four/five,five/six}
  \draw[po flow] (po \a.south) -- (po \b.north);
\coordinate (po split) at ($(po six.south)+(0,-.18)$);
\draw[line width=.65pt] (po six.south) -- (po split);
\draw[po flow] (po split) -| (po a.north);
\draw[po flow] (po split) -| (po b.north);
\coordinate (po join) at ($(po merge)+(0,-.18)$);
\draw[line width=.65pt] (po a.south) |- (po join);
\draw[line width=.65pt] (po b.south) |- (po join);
\draw[po flow] (po join) -- (po seven.north);
\draw[po cost] (po seven.south) -- (po eight.north);

\node[anchor=north west,inner sep=0pt,text width=16.2cm,align=left,
  font=\fontsize{8}{9.5}\selectfont]
  at ($(po eight.south -| 0,0)+(0,-.25)$)
  {Solid arrows show normalization; branches show possible vertex modes;
   the dashed arrow applies the cost identity.};
\end{tikzpicture}
\caption{Normalization of a bounded-cost schedule. Steps 1--7 preserve
feasibility and do not increase total weighted tardiness; Step 8 applies the
cost identity and the MAX-CUT threshold equivalence. The legend identifies
the supporting results.}
\label{fig:proof-overview}
\endgroup
\end{figure}
\afterpage{\clearpage}

\begin{lemma}[Control placement]\label{lem:controls}
In any integral feasible schedule of cost at most $\UB$, for every $i\in\mathcal V$, guard job $R_i$ starts at its release date, while charge job $Z_i$ starts at either $b_i+mP$ or $b_i+mP+1$ and has weighted-tardiness cost zero or $F_i$, respectively.
\end{lemma}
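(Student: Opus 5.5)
The plan is to use the two expensive control jobs directly. Every job's tardiness is a nonnegative integer in an integral schedule, and the guards and charges have weights large enough that even a small amount of tardiness would already exceed $\UB$. I first fix the guards, then use the fixed guards to confine each charge job.

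\emph{Guards.} Fix $i\in\mathcal V$. By Table~\ref{tab:jobs} and \eqref{eq:blocks}, $R_i$ has release date $b_i+(m+1)P+1$ and due date $b_{i+1}=b_i+(m+2)P+1$, which is exactly its release date plus $P$. So $R_i$ is on time if and only if it starts at its release date. In an integral schedule, a start strictly after the release date is at least one unit later. Then $T_{R_i}\ge 1$, and the cost is at least $w_{R_i}=\UB+1>\UB$, which contradicts the cost bound. Hence $s_{R_i}=r_{R_i}$, and $R_i$ occupies $[b_i+(m+1)P+1,\,b_{i+1})$.

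\emph{Charges.} $Z_i$ has release date $b_i+mP$, due date $b_i+(m+1)P$ and weight $F_i$. Since $Z_i$ cannot overlap $R_i$, I split into two cases according to whether $Z_i$ runs before or after $R_i$.
\begin{itemize}
\item If $Z_i$ runs before $R_i$, it must satisfy $s_{Z_i}\ge b_i+mP$ and $s_{Z_i}+P\le s_{R_i}=b_i+(m+1)P+1$. So $s_{Z_i}\in[b_i+mP,\,b_i+mP+1]$, and by integrality $s_{Z_i}$ is one of the two endpoints. These give tardiness $0$ and $1$, that is, weighted cost $0$ and $F_i$ respectively, as claimed.
\item If $Z_i$ runs after $R_i$, then $s_{Z_i}\ge b_{i+1}$. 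Its completion time is then at least $b_i+(m+3)P+1$, so its tardiness is at least $2P+1$. Its cost alone is at least $F_i(2P+1)\ge(\min_{1\le i'\le n}F_{i'})(2P+1)>\UB$ by \eqref{eq:chargeforce} of Lemma~\ref{lem:parameter-bounds}. This contradicts the cost bound.
\end{itemize}
The release date rules out any start of $Z_i$ earlier than $b_i+mP$, so the two cases cover all possibilities.

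I expect no real obstacle. The only points needing care are using integrality explicitly: it makes positive tardiness at least one unit, which forces the guards exactly and leaves the charge only the two integer start times. The second is invoking the already established margin \eqref{eq:chargeforce} to exclude a charge job placed after its guard.
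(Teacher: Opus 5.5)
Your proposal is correct and follows essentially the same route as the paper. In both, integrality and the guard weight $\UB+1$ with $d_{R_i}=r_{R_i}+P$ pin each guard at its release date. The charge then either finishes by the guard's start, which leaves only the two integral starts with cost $0$ or $F_i$, or it starts after the guard, giving tardiness at least $2P+1$, which \eqref{eq:chargeforce} excludes.
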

\begin{proof}
By Table~\ref{tab:jobs} and \eqref{eq:blocks}, $d_{R_i}=r_{R_i}+P$ and $w_{R_i}=\UB+1$. Thus, if a guard job were late in the integral schedule, it would cost at least $\UB+1$. Every other cost is nonnegative, so this would exceed the schedule's cost bound. Thus it occupies its prescribed interval.

Let $s=b_i+(m+1)P+1$ be its start. Charge $Z_i$ has release $s-P-1$ and due date $s-1$. If the charge does not finish by $s$, nonpreemption and the fixed guard imply it cannot start before $s+P$ and cannot finish before $s+2P$. Its tardiness would be at least $2P+1$, contradicting \eqref{eq:chargeforce}. Integral feasibility before $s$ leaves exactly the two stated charge starts.
\end{proof}

For an integral schedule of cost at most $\UB$, consider the interval in
vertex block $i$ before the charge job starts. Lemma~\ref{lem:controls}
places the charge start at either $b_i+mP$ or $b_i+mP+1$, so this interval
has length at most $mP+1$ and hence fits at most $m$ $A/B$-jobs. The gap after a charge and before its guard is at most one unit, less than $P$. The last guard occupies $[H-P,H)$, so no $A/B$-job can straddle $H$. Consequently, every $A/B$-job executed before $H$ is in an early window and completes no later than the last charge start $b_n+mP+1=H-2P$; by \eqref{eq:duebound} it is on time. Every other $A/B$-job starts at or after $H$ and is tardy, since Table~\ref{tab:jobs} gives $d_g\le H$. Call every $A$- or $B$-job processed at or after $H$ a \emph{tail job}. Thus the tail jobs are precisely the $A/B$-jobs not processed in an early window; charge and guard jobs are not included. By \eqref{eq:abrelease}, all tail jobs have release date at most $H-3P<H$. We may therefore compact them consecutively from $H$, retaining their order and leaving the early part and charge and guard jobs unchanged; this cannot increase cost. Since there are $2M$ $A/B$-jobs in total, there are at most $2M$ tail jobs, so their completion times then satisfy
\begin{equation}
H+P\le C\le H+2MP,
\qquad P\le C-d_g\le(2M+1)P.
\label{eq:tailbound}
\end{equation} 
The completion-time bounds use the preceding compaction and the fact that
there are at most $2M$ tail jobs. For a tail $A/B$-job with incidence rank
$g$, Table~\ref{tab:jobs} gives
$C-d_g=(C-H)+(g-1)L$. By \eqref{eq:length},
$0\le(g-1)L\le(M-1)L=P-L<P$, which gives the stated bounds on the job's tardiness $C-d_g$.
These bounds do not hold for an arbitrary uncompressed tail; compaction supplies the finite horizon before any exchange is used. The same bounds remain valid if some of these tail slots are later left empty. We use this observation during the exchanges below.

\begin{lemma}[Nested slot matching]\label{lem:matching}
Under the charge and guard placements of Lemma~\ref{lem:controls}, let $S$
be a selected set of early $A/B$-jobs and let $c_g\in\{0,1,2\}$ count its
members from the job pair $(A_g,B_g)$, for $1\le g\le M$. If $S$ is
schedulable before the charge jobs, then necessarily
\begin{equation}
\sum_{h=g}^{M}c_h\le M-g+1\qquad(1\le g\le M).
\label{eq:suffix}
\end{equation}
Conversely, every selected set $S$ whose counts satisfy \eqref{eq:suffix}
can be scheduled in the following early slots: slot $g=(i-1)m+k$ starts at
\begin{equation}
b_i+(k-1)P+1,
\qquad 1\le i\le n,\quad 1\le k\le m,
\label{eq:grid}
\end{equation}
with all charges one unit late and all guards at their release dates.
\end{lemma}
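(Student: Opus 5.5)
Necessity is a release-date counting argument, and sufficiency is a greedy assignment of the selected jobs to the grid slots \eqref{eq:grid}, processed in decreasing rank.

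\emph{Necessity.} Fix $g=(i-1)m+k$. Every job $A_h$ or $B_h$ with $h\ge g$ has release date at least $b_{i'}+(k'-1)P\ge b_i+(k-1)P$, where $h=(i'-1)m+k'$. This holds because the releases in Table~\ref{tab:jobs} are nondecreasing in $h$: within a block they increase by $P$, and across blocks $b_{i+1}-b_i=(m+2)P+1$ exceeds $(m-1)P+1$. So every member of $S$ counted in $\sum_{h\ge g}c_h$ starts at or after $b_i+(k-1)P$.

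By Lemma~\ref{lem:controls} and the remarks following it, each early job lies in the early window of some block $i'\ge i$. These windows end at the charge starts, which are at most $b_{i'}+mP+1$. The portion of block $i$'s window after $b_i+(k-1)P$ has length at most $(m-k+1)P+1<(m-k+2)P$, since $P>1$ by \eqref{eq:length}. It therefore holds at most $m-k+1$ jobs. Each later block $i'>i$ holds at most $m$ jobs, by the capacity remark after Lemma~\ref{lem:controls}. Summing gives $(m-k+1)+(n-i)m=M-g+1$, which is \eqref{eq:suffix}.

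\emph{Sufficiency.} Assign jobs to slots greedily, from the highest rank down. Process the pairs $h=M,M-1,\dots,1$, and give each selected job from pair $h$ the largest still-empty slot index $\le h$. Condition \eqref{eq:suffix} is exactly Hall's condition for this nested structure. At the moment pair $h$ is processed, the jobs already placed plus those of pair $h$ number $\sum_{h'\ge h}c_{h'}$. This is at most $M-h+1$, the number of slots with index $\ge h$. Since earlier placements may also have used slots of index $\ge h$, a slot of index $\le h$ is always free. Equivalently, a standard induction on $g$ descending shows that the greedy never fails.

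Each job from pair $h$ then sits in a slot $g\le h$. That slot's start $b_i+(k-1)P+1$ is at least the job's release: slot starts and releases are both monotone in the index, so the slot start is at least the $B_g$ release $b_i+(k-1)P+1$, which is at least the release of $A_g$. Here we need the slot start of index $g$ to be at least the release of pair $h\ge g$. That is false in general. The fix is to run the greedy in the opposite direction, giving each job the smallest free slot of index $\ge h$. The count of jobs with rank $\ge g$ at most $M-g+1$ is then exactly what guarantees that such slots exist. Slot $g\ge h$ starts at $\ge b_{i}+(k-1)P+1\ge r_{B_h}\ge r_{A_h}$.

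Finally, check the time layout. The slots in block $i$ occupy $[b_i+1,\,b_i+mP+1)$ and do not overlap. The charge $Z_i$ then starts at $b_i+mP+1$, one unit late at cost $F_i$. The guard $R_i$ starts at its release $b_i+(m+1)P+1$, right after $Z_i$ completes. Unused slots are left idle, and the unselected $A/B$-jobs go to the tail. The main obstacle is getting the direction of the greedy right, so that release feasibility and \eqref{eq:suffix} align; assigning each job to the smallest free slot of index at least its rank resolves it.
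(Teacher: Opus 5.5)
Your proposal is correct once the self-correction is applied, and it follows the paper's route. Necessity is the same capacity count, $(m-k+1)+(n-i)m=M-g+1$ from the release $b_i+(k-1)P$ onward. Sufficiency rests on the same observation: \eqref{eq:suffix} lets each selected job of rank $h$ occupy a distinct slot of index $g\ge h$, whose start $r_{B_g}\ge r_{B_h}$ respects its release. The paper writes this assignment down directly, sending the $t$-th smallest selected rank $u_t$ to slot $M-s+t$, since \eqref{eq:suffix} at $u_t$ gives $u_t\le M-s+t$.

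In a final write-up, delete the abandoned ``largest free slot $\le h$'' greedy and its justification; that justification is false, because \eqref{eq:suffix} is not Hall's condition for that structure. Then state the invariant that makes the corrected greedy work. Process $h=M,M-1,\dots,1$. Every previously placed job sits in a slot of index at least its own rank, hence greater than $h$, so at least $(M-h+1)-\sum_{h'>h}c_{h'}\ge c_h$ slots of index $\ge h$ remain free.
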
  
\begin{proof}
From the release date $b_i+(k-1)P$ associated with incidence rank
$g=(i-1)m+k$ onward, its vertex window has length at most
$(m-k+1)P+1$ by Lemma~\ref{lem:controls}. It therefore fits at most
\[
\left\lfloor\frac{(m-k+1)P+1}{P}\right\rfloor=m-k+1
\]
jobs, since $P>1$ by \eqref{eq:length}. Each later window fits at most
$m$. Neither member of a job pair ranked at least $g$ can run before this
release date, by the release dates in Table~\ref{tab:jobs} and the ordering
of the global incidence ranks. The total capacity is at most
$(m-k+1)+(n-i)m=M-g+1$. This proves \eqref{eq:suffix}; any older jobs
occupying these windows can only reduce the available capacity.

Conversely, the empty-set case uses no early $A/B$-job slots. Otherwise
let the chosen incidence ranks, with one entry for each selected job, be
$u_1\le\cdots\le u_s$. The suffix inequality at $u_t$ gives
$s-t+1\le M-u_t+1$, so $u_t\le M-s+t$. Assign job $t$ to slot number
$M-s+t$ in \eqref{eq:grid}, for $1\le t\le s$. Its incidence rank is at
most that slot number; thus its release date is satisfied, even if it is a
$B$-job. Slots do not overlap. The last slot in each vertex block completes
at its proposed charge-job start, and the charge job completes at its
guard's start. The latest early-slot completion in \eqref{eq:grid} is
$H-2P$, so \eqref{eq:duebound} makes every early $A/B$-job on time. By
Table~\ref{tab:jobs}, each charge job is one unit late; summing their
weights gives total charge contribution $\FS$ by \eqref{eq:actual}.
\end{proof}

\begin{lemma}[One early job per incidence]\label{lem:exchange}
Every feasible schedule of cost at most $\UB$ can be transformed in polynomial time into an integral feasible schedule of no greater cost with exactly one early member of each job pair $(A_g,B_g)$.
\end{lemma}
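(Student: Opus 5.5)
First apply the left shift \eqref{eq:leftshift} to the given schedule's permutation. This produces an integral feasible schedule of no greater cost, so Lemma~\ref{lem:controls} applies: every guard sits at its release date and every charge starts at $b_i+mP$ or $b_i+mP+1$. Next compact the tail from $H$ as described before Lemma~\ref{lem:matching}. After compaction every tail job satisfies \eqref{eq:tailbound}, and the early set $S$ satisfies the suffix condition \eqref{eq:suffix}. The rest of the proof is a repair loop driven by the counts $c_g=|\{A_g,B_g\}\cap S|$.

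If some $c_g\ne1$, let $g$ be the largest such index. By maximality, $c_h=1$ for all $h>g$. Applying \eqref{eq:suffix} at $g$ then gives $c_g+(M-g)\le M-g+1$, so $c_g\le1$, and therefore $c_g=0$.

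The repair moves $A_g$ from the tail into the early set. If $S\cup\{A_g\}$ still satisfies \eqref{eq:suffix}, nothing is removed. Otherwise some suffix constraint at an index $g'\le g$ is violated. In that case we remove one early job of rank $h<g$, chosen to be a member of a pair with $c_h=2$ if one exists and of rank as large as possible below $g$. We check that a single removal restores \eqref{eq:suffix}: adding $A_g$ raises each suffix sum with index at most $g$ by one, and each such sum was at most one over its bound. Lemma~\ref{lem:matching} then reschedules the new early set in the slots \eqref{eq:grid}, with all charges one unit late and guards fixed. The removed job is appended to the tail, and the tail is recompacted from $H$.

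The cost accounting has three terms.
\begin{itemize}
\item Moving $A_g$ early saves its tardiness times weight. By \eqref{eq:tailbound} this is at least $(U+\ell_{e(g)})P\ge UP$.
\item The removed job becomes tardy. The tail keeps at most $2M$ jobs, so again by \eqref{eq:tailbound} the added cost is at most $(U+V)(2M+1)P$.
\item Switching every charge to one unit late costs at most $\FS$. By \eqref{eq:addbound} this term alone is dominated.
\end{itemize}
When no job is removed, the net change is negative by \eqref{eq:addbound}. The removal case is the main obstacle, because bounding the displaced job naively by $(U+V)(2M+1)P$ is far larger than the saving $UP$. The bound must use the rank gap: the removed job has rank $h<g$, hence due date $d_h\ge d_g+L$. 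I would keep the removed job's tail position equal to the position $A_g$ vacated, rather than recompacting arbitrarily. Then its tardiness is smaller than $A_g$'s old tardiness by at least $(g-h)L\ge L$.

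With that choice the change in weighted tardiness is at most
\[
-(U)L+\bigl|w_{\text{removed}}-w_{A_g}\bigr|(2M+1)P+\FS\le -UL+V(2M+1)P+\FS<0
\]
by \eqref{eq:exchangebound}, since all $A/B$ weights lie in $[U,U+V]$. This is exactly the role of \eqref{eq:exchangebound}, so the plan hinges on arranging the swap position this way. A subsequent compaction only lowers the cost further.

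Each repair makes $c_g=1$ for the current largest defective index. Removals only touch ranks below $g$, so the largest defective index strictly decreases. Hence at most $M$ repairs occur, each is polynomial (a suffix check and the slot assignment), and the cost never increases. The loop ends with exactly one early member of every pair, in an integral feasible schedule of cost at most $\UB$.
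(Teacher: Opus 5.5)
Your proposal is correct and follows essentially the same route as the paper. You left shift and compact, repair the largest defective rank $g$ (forced to have $c_g=0$), and rebuild the early part via Lemma~\ref{lem:matching}. You then put the removed job $D$ into the vacated tail slot of $A_g$, so that the rank gap $d_h-d_g=(g-h)L\ge L$ together with \eqref{eq:exchangebound} pays for the charge increase $\FS$; in the no-removal case \eqref{eq:addbound} does this instead.

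One step needs tightening: your claim that a single removal restores \eqref{eq:suffix}. Removing a rank-$h$ job lowers only the suffixes with index at most $h$, so you need $h\ge t$, where $t<g$ is the largest tight index. Your ``largest rank below $g$'' rule does achieve this, because tightness at $t$ forces $g-t+1$ early jobs (hence some $c_h=2$) among the $g-t$ ranks in $[t,g-1]$. This is exactly the count the paper uses.
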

\begin{proof}
Apply \eqref{eq:leftshift} and, if needed, compact the tail from $H$. The cost remains at most $\UB$, Lemma~\ref{lem:controls} applies, and \eqref{eq:tailbound} holds. If every early count is one, we are done. Otherwise choose the largest index $g$ for which $c_g\ne1$. Higher counts are all one; \eqref{eq:suffix} therefore forces $c_g=0$, giving exactly one unit of slack in the suffix at $g$. Both members of this job pair are consequently in the tail.  Thus $A_g$ has
a current tail slot; let $C$ be its completion time in that slot.  Let $e=e(g)$. Insert $A_g$ into the early set, removing at most one job from a pair with a lower incidence rank as follows.

If no suffix with index below $g$ is tight in \eqref{eq:suffix}, insertion alone preserves every inequality. Otherwise let $t<g$ be the largest tight suffix index. All suffixes with indices in $(t,g]$ have positive integer slack.  The tight suffix at $t$ contains $M-t+1$ early jobs by \eqref{eq:suffix}; the suffix at $g$ contains $M-g$, because $c_g=0$ and $c_h=1$ for $h>g$. Subtracting gives the number of early jobs with ranks in $[t,g-1]$:
\[
(M-t+1)-(M-g)=g-t+1>0.
\]
Remove any such job $D$, with incidence rank $t\le h<g$. After insertion and removal, suffix counts at indices at most $h$ have net change zero, those between $h+1$ and $g$ increase by one but had slack, and all other suffixes are unchanged. Thus \eqref{eq:suffix} remains valid, and every count remains in $\{0,1,2\}$.

Rebuild the early part using Lemma~\ref{lem:matching}. Every rebuilt early $A/B$-job is on time, and all guards retain zero cost. If the total charge costs before and after rebuilding are $Z_{\mathrm{old}}$ and $Z_{\mathrm{new}}$, respectively, then $0\le Z_{\mathrm{old}},Z_{\mathrm{new}}\le\FS$, by Lemma~\ref{lem:controls} for the old charges, Lemma~\ref{lem:matching} for the rebuilt charges, and \eqref{eq:actual} for their total weight, so their increase is at most $\FS$. Leave guards and all tail slots fixed, except the old slot of $A_g$, which completes at $C$. If $D$ was removed, put it in that slot. Write its weight as $U+v_D$ with $1\le v_D\le V$,  by Table~\ref{tab:jobs}, \eqref{eq:intervals}, and \eqref{eq:V}.  The common due-date formula in Table~\ref{tab:jobs} gives $d_h=H-(h-1)L$ and $d_g=H-(g-1)L$. Since $h<g$, it follows that $d_h-d_g=(g-h)L\ge L$. The slot starts at or after $H$, and $D$ is released before $H$ by \eqref{eq:abrelease}. 
Moreover, $C-d_g$ lies in $[P,(2M+1)P]$ by \eqref{eq:tailbound}. Also,
$H+P\le C\le H+2MP$ for this tail slot, and
$0\le(h-1)L\le(M-1)L=P-L<P$ by \eqref{eq:length}; hence
$C-d_h=C-H+(h-1)L$ also lies in $[P,(2M+1)P]$. Thus replacement preserves feasibility, both costs are linear tardiness costs, and the tail saving is
\begin{align*}
&(U+\ell_e)(C-d_g)-(U+v_D)(C-d_h)\\
&\quad=U(d_h-d_g)+\ell_e(C-d_g)-v_D(C-d_h)\\
&\quad\ge UL-V(2M+1)P>\FS.
\end{align*}  The strict inequality is due to \eqref{eq:exchangebound} of Lemma~\ref{lem:parameter-bounds}.
The omitted term $\ell_e(C-d_g)$ is nonnegative, and the negative term is bounded using $v_D\le V$ and the tail horizon. Consequently, writing $J_{\mathrm{old}}$ and $J_{\mathrm{new}}$ for the total schedule costs, we obtain
\[
J_{\mathrm{old}}-J_{\mathrm{new}}
\ge UL-V(2M+1)P-(Z_{\mathrm{new}}-Z_{\mathrm{old}})
\ge Q(U+1)>0.
\] 
The last inequality uses $Z_{\mathrm{new}}-Z_{\mathrm{old}}\le\FS$ and the exact exchange margin \eqref{eq:exchange-margin}.
This bound covers simultaneous changes to all charges, not only to the two affected vertices.

If no job was removed, then after $A_g$ has been moved into the early part,
leave its old tail slot idle. By \eqref{eq:tailbound}, deleting that tail slot saves at least $P(U+\ell_e)$, while rebuilding charges adds at most $\FS$. Hence
\[
J_{\mathrm{old}}-J_{\mathrm{new}}
\ge P(U+\ell_e)-\FS
=Q\left[P_0(U+\ell_e)-\sum_{i=1}^{n} f_i\right]>0,
\]
The equality uses \eqref{eq:actual}, and strict positivity follows from \eqref{eq:addbound}.
In either case every job is still scheduled exactly once and total cost
strictly decreases. Hence the schedule still has cost at most $\UB$, and it
remains integral because all jobs are placed in integer slots. It also
continues to satisfy \eqref{eq:tailbound} for the following reason. The
exchange only replaces jobs in existing tail slots or leaves such slots
empty, so no occupied tail slot moves later than before. If a removed job
$D$ with incidence rank $h$ is placed in such a slot, then
$0\le(h-1)L\le(M-1)L=P-L<P$ by \eqref{eq:length}, so the bound
$P\le C-d_h\le(2M+1)P$ still applies. No tail sorting or compaction is required between exchanges. The largest defective rank has been corrected; every rank at least $g$ now has count one, and any remaining or newly created defect has smaller rank. Repeating therefore terminates after at most $M$ exchanges. Each exchange and slot reconstruction takes polynomial time and uses only the displayed job data and the current schedule. No optimality assumption is used.
\end{proof}

\begin{lemma}[Vertex consistency]\label{lem:consistency}
Every integral feasible schedule of cost at most $\UB$ with exactly one early member of each job pair $(A_g,B_g)$ can be transformed in polynomial time, without increasing cost, into a schedule in which every vertex $i$ uses one of two modes. In the $A$-mode, for every incidence $g$ of $i$, job $A_g$ is early and job $B_g$ is in the tail; charge job $Z_i$ completes at its due date and contributes zero to the objective. In the $B$-mode, every such $B_g$ is early and the corresponding $A_g$ is in the tail; $Z_i$ completes one unit after its due date and contributes $F_i$. In either mode, guard job $R_i$ starts at its release date.
\end{lemma}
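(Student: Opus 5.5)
We follow the overview in Figure~\ref{fig:proof-overview}, Steps~5 and~6: first recover the exact early structure from the rank counts, then repair mixed vertices by pairwise swaps. Start from a schedule as in the statement, with $c_g=1$ for every $g$, so exactly $M$ early $A/B$-jobs fill the $M$ early windows. By the capacity count in the proof of Lemma~\ref{lem:matching}, each vertex window holds at most $m$ early jobs, so each holds exactly $m$. Order the early jobs by start time and let $\rho_g$ be the incidence rank of the job in the $g$-th early slot (slots indexed as in \eqref{eq:grid}). A job of rank $h$ cannot start before its release $b_{i}+(k-1)P$, so it cannot occupy a slot with index below $h$. This gives $\rho_g\le g$ for every $g$. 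Since $\rho$ is a bijection of $\{1,\ldots,M\}$, induction from $g=1$ yields $\rho_g=g$. Thus in block $i$ the $k$-th early job is the selected member of pair $g=(i-1)m+k$.

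Next left-shift via \eqref{eq:leftshift}. Within block $i$ the $k$-th start becomes $b_i+(k-1)P+\varepsilon_k$, with $\varepsilon_k=\max\{\varepsilon_{k-1},z_k\}$, where $z_k=1$ if the selected job is $B$ and $0$ if it is $A$. Hence $\varepsilon_m\in\{0,1\}$. The charge cost $F_i$ is paid exactly when $\varepsilon_m=1$, that is, when some early job of block $i$ is a $B$-job. Otherwise the cost is zero, consistent with Lemma~\ref{lem:controls}. A pure-$A$ vertex is therefore in $A$-mode and a pure-$B$ vertex in $B$-mode. The guard remains at its release by Lemma~\ref{lem:controls}.

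For a mixed vertex $i$ (at least one early $A$ and at least one early $B$), swap each early $A_g$ of block $i$ with its tail partner $B_g$: put $B_g$ into the early slot and $A_g$ into $B_g$'s old tail slot. Early feasibility holds because the block already has $\varepsilon$ reaching $1$ no later than the last position. We can also simply reschedule the block on the $+1$ grid \eqref{eq:grid}, so every $B_g$ meets its release. The charge then finishes one unit late, which costs the same $F_i$ already paid, and later blocks are unaffected because the guard start is unchanged. In the tail, $A_g$ and $B_g$ share the due date $d_g$ and occupy the same slot with completion $C$, where $C-d_g\ge P>0$ by \eqref{eq:tailbound}. The saving is therefore $(h_e-\ell_e)(C-d_g)=\Delta_e(C-d_g)>0$, which is the positive tail saving \eqref{eq:mixedsaving}. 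Processing mixed vertices one at a time strictly decreases cost and makes every vertex pure, so the cost stays $\le\UB$. The procedure is polynomial.

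The main obstacle is the rank-forcing step $\rho_g=g$ together with the boundary behaviour. A job of some rank may sit in an earlier block only if its release allows it, and the argument above uses the fact that releases are nondecreasing in global rank while slots are filled to capacity. Care is also needed for the case where the first $B$ appears late in a mixed block. We must verify that the early jobs before it start at $b_i+(k-1)P$ without waiting, so that shifting the whole block by one unit fits before the charge start $b_i+mP+1$. The capacity bound $\lfloor((m-k+1)P+1)/P\rfloor$ guarantees this.
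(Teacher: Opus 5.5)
Your proposal is correct and follows essentially the same route as the paper's proof. Both arguments use the saturation of all $n$ windows, the rank-forcing bound $\rho_g\le g$ together with the permutation argument, the left-shift recurrence $\varepsilon_k=\max\{\varepsilon_{k-1},z_k\}$, and conversion of each mixed vertex to the all-$B$ mode with saving $\sum\Delta_{e(g)}(C_g-d_g)>0$. The paper states explicitly the bound you leave implicit, namely that the $k$-th job of a full window starts by $b_i+(k-1)P+1$. Also, your first justification of early feasibility after the swaps (``$\varepsilon$ reaches $1$ by the last position'') does not suffice on its own: it is rescheduling the whole block on the $+1$ grid, as the paper does, that makes the swapped-in $B$-jobs release-feasible at no extra charge cost.
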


Figure~\ref{fig:lemma5-modes} illustrates these two modes, including their
release and due dates, the one-unit idle interval, and the complementary
tail jobs.

% Embedded illustration of Lemma 5 (formerly Lemma5Figure.tex).
\begin{figure}[p]
\centering
\begingroup
\color{black}
\renewcommand{\baselinestretch}{1}\selectfont
\begin{tikzpicture}[x=1cm,y=1cm,font=\footnotesize]
\path[use as bounding box] (0,-14.25) rectangle (16.4,3.10);

\node[anchor=west,font=\small] at (0,2.89)
  {\textbf{Vertex block $i$:} $g_k=(i-1)m+k$, $1\leq k\leq m$; every job has duration $P$.};
% Legend: colors are reinforced by patterns, borders and explicit labels.
\draw[lf5/A] (0,2.20) rectangle (.40,2.49);
\node[anchor=west] at (.48,2.345) {$A$-job};
\draw[lf5/B] (2.10,2.20) rectangle (2.50,2.49);
\node[anchor=west] at (2.58,2.345) {$B$-job};
\draw[lf5/charge] (4.20,2.20) rectangle (4.60,2.49);
\node[anchor=west] at (4.68,2.345) {charge $Z_i$};
\draw[lf5/guard] (7.00,2.20) rectangle (7.40,2.49);
\node[anchor=west] at (7.48,2.345) {guard $R_i$};
\node[lf5/release] at (10.05,2.345) {};
\node[anchor=west] at (10.20,2.345) {release};
\node[lf5/due] at (12.35,2.345) {};
\node[anchor=west] at (12.50,2.345) {due date};

% Display geometry: P=2.5 cm, displayed one-unit offset=.55 cm.
% The .7 cm ellipsis omits incidences k=3,...,m-1; it has no time scale.
% These display lengths are NOT a numerical instance of the reduction.
\begin{scope}
  \node[anchor=west,font=\small\bfseries] at (0,1.92)
    {(a) $A$-mode: all $A$-jobs early};
  \node[anchor=east] at (16.25,1.92) {$T_{Z_i}=0,\quad w_{Z_i}T_{Z_i}=0$};
  \foreach \x in {1,3.5,6,6.7,9.2,11.7,12.25,14.75}
    \draw[lf5/guide] (\x,0)--(\x,1.10);
  \LFiveJob{A}{1}{3.5}{$A_{g_1}$}
  \LFiveJob{A}{3.5}{6}{$A_{g_2}$}
  \node at (6.35,.70) {$\cdots$};
  \LFiveJob{A}{6.7}{9.2}{$A_{g_m}$}
  \LFiveJob{charge}{9.2}{11.7}{$Z_i$}
  \draw[lf5/idle] (11.7,.35) rectangle (12.25,1.05);
  \LFiveGuardJob{12.25}{14.75}{$R_i$}
  \foreach \x in {1,3.5,6.7,9.2,12.25} \LFiveRelease{\x}
  \LFiveDue{11.7}\LFiveDue{14.75}
  \draw[lf5/brace] (1,1.25)--(3.5,1.25) node[midway,above=3pt] {$P$};
  \draw[lf5/brace] (11.7,1.25)--(12.25,1.25) node[midway,above=3pt] {$1$};
  \node[anchor=east] at (11.45,1.28) {$C_{Z_i}=d_{Z_i}$};
  \node[anchor=east] at (16.25,1.28) {$C_{R_i}=d_{R_i}$};
  \draw[lf5/axis] (.55,0)--(15.45,0);
  \node[anchor=west] at (15.48,0) {$t-b_i$};
  \LFiveTick{1}{-.15}{$0$}
  \LFiveTick{3.5}{-.15}{$P$}
  \LFiveTick{6}{-.15}{$2P$}
  \LFiveTick{6.7}{-.66}{$(m-1)P$}
  \LFiveTick{9.2}{-.15}{$mP$}
  \LFiveTick{11.7}{-.15}{$(m+1)P$}
  \LFiveTick{12.25}{-.66}{$(m+1)P+1$}
  \LFiveTick{14.75}{-.15}{$(m+2)P+1$}
  \draw[black!45,line width=.35pt] (6.7,-.07)--(6.7,-.62);
  \draw[black!45,line width=.35pt] (12.25,-.07)--(12.25,-.62);
  \node[anchor=west] at (0,-1.16)
    {Idle: $[d_{Z_i},r_{R_i})$, length $1$. \quad Early jobs and guard: zero tardiness.};
  \node[anchor=west] at (0,-1.57)
    {Tail members from this vertex: $B_{g_1},\ldots,B_{g_m}$, with weights $U+h_{e(g_k)}$.};
\end{scope}

\draw[black!18] (0,-1.82)--(16.4,-1.82);
\begin{scope}[yshift=-4.05cm]
  \node[anchor=west,font=\small\bfseries] at (0,1.92)
    {(b) $B$-mode: all $B$-jobs early};
  \node[anchor=east] at (16.25,1.92) {$T_{Z_i}=1,\quad w_{Z_i}T_{Z_i}=F_i$};
  \foreach \x in {1,1.55,4.05,6.55,7.25,9.2,9.75,11.7,12.25,14.75}
    \draw[lf5/guide] (\x,0)--(\x,1.10);
  \draw[lf5/idle] (1,.35) rectangle (1.55,1.05);
  \LFiveJob{B}{1.55}{4.05}{$B_{g_1}$}
  \LFiveJob{B}{4.05}{6.55}{$B_{g_2}$}
  \node at (6.90,.70) {$\cdots$};
  \LFiveJob{B}{7.25}{9.75}{$B_{g_m}$}
  \LFiveJob{charge}{9.75}{12.25}{$Z_i$}
  \LFiveGuardJob{12.25}{14.75}{$R_i$}
  \foreach \x in {1.55,4.05,7.25,9.2,12.25} \LFiveRelease{\x}
  \LFiveDue{11.7}\LFiveDue{14.75}
  \draw[lf5/brace] (1,1.25)--(1.55,1.25) node[midway,above=3pt] {$1$};
  \draw[lf5/brace] (1.55,1.25)--(4.05,1.25) node[midway,above=3pt] {$P$};
  \draw[lf5/brace] (11.7,1.25)--(12.25,1.25) node[midway,above=3pt] {$1$};
  \draw[line width=.45pt] (11.7,.35)--(11.7,1.05);
  \node[anchor=east] at (11.43,1.28) {$d_{Z_i}$};
  \node[anchor=east] at (16.25,1.28) {$C_{R_i}=d_{R_i}$};
  \draw[lf5/axis] (.55,0)--(15.45,0);
  \node[anchor=west] at (15.48,0) {$t-b_i$};
  \LFiveTick{1}{-.15}{$0$}
  \LFiveTick{1.55}{-.66}{$1$}
  \LFiveTick{4.05}{-.15}{$P+1$}
  \LFiveTick{6.55}{-.15}{$2P+1$}
  \LFiveTick{7.25}{-.66}{$(m-1)P+1$}
  \LFiveTick{9.2}{-.15}{$mP$}
  \LFiveTick{9.75}{-.66}{$mP+1$}
  \LFiveTick{11.7}{-.15}{$(m+1)P$}
  \LFiveTick{12.25}{-.66}{$(m+1)P+1$}
  \LFiveTick{14.75}{-.15}{$(m+2)P+1$}
  \foreach \x in {1.55,7.25,9.75,12.25}
    \draw[black!45,line width=.35pt] (\x,-.07)--(\x,-.62);
  \node[anchor=west] at (0,-1.16)
    {Idle: $[b_i,b_i+1)$, length $1$. \quad $C_{Z_i}=d_{Z_i}+1=r_{R_i}$; guard unchanged.};
  \node[anchor=west] at (0,-1.57)
    {Tail members from this vertex: $A_{g_1},\ldots,A_{g_m}$, with weights $U+\ell_{e(g_k)}$.};
\end{scope}

\draw[black!18] (0,-5.99)--(16.4,-5.99);
\node[anchor=west,font=\small\bfseries] at (0,-6.40)
  {(c) Global location of the $A/B$ due dates and the tail};
\node[anchor=west] at (0,-6.89)
  {$d_{A_g}=d_{B_g}=d_g=H-(g-1)L$, \quad $P=ML$, \quad $g=1,\ldots,M$.};

% Global timing is a separate scale. All incidence due dates are in the
% last guard interval, strictly after H-P and no later than H.
\begin{scope}[yshift=-9.25cm]
  \node[anchor=west,align=left] at (.05,.72)
    {Every early $A/B$-job\\finishes by $H-2P$};
  \draw[-{Stealth[length=1.2mm]},line width=.45pt] (2.87,.64)--(3.20,.35);
  \node at (4.65,.70) {$\cdots$};
  \LFiveGuardJob{6.2}{9.2}{$R_n$}
  \draw[lf5/base,fill=black!5] (9.2,.35) rectangle (12.2,1.05);
  \node[align=center] at (10.7,.70) {first tail job};
  \draw[lf5/base,fill=black!5] (12.2,.35) rectangle (15.2,1.05);
  \node at (13.7,.70) {$\cdots$};
  \foreach \x in {3.2,6.2,9.2,12.2,15.2}
    \draw[lf5/guide] (\x,0)--(\x,.35);
  \draw[lf5/axis] (2.85,0)--(15.65,0);
  \node[anchor=west] at (15.72,0) {$t$};
  \LFiveTick{3.2}{-.15}{$H-2P$}
  \LFiveTick{6.2}{-.15}{$H-P$}
  \LFiveTick{9.2}{-.15}{$H$}
  \LFiveTick{12.2}{-.15}{$H+P$}
  \LFiveTick{15.2}{-.15}{$H+2P$}
  \LFiveRelease{6.2}
  % Three representative due-date markers; dots suppress all intermediate ranks.
  \draw[black!45,line width=.4pt] (6.8,1.05)--(6.8,1.46);
  \draw[black!45,line width=.4pt] (8.6,1.05)--(8.6,1.46);
  \draw[black!45,line width=.4pt] (9.2,1.05)--(9.2,1.46);
  \foreach \x in {6.8,8.6,9.2}\node[lf5/due] at (\x,1.46) {};
  \node[anchor=south] at (6.8,1.56) {$d_M$};
  \node at (7.7,1.48) {$\cdots$};
  \node[anchor=south east] at (8.6,1.56) {$d_2$};
  \node[anchor=south west] at (9.2,1.56) {$d_1=H$};
  \draw[lf5/brace] (6.2,1.22)--(6.8,1.22) node[midway,above=3pt] {$L$};
  \draw[lf5/brace] (9.2,1.27)--(12.2,1.27)
    node[midway,above=3pt] {$P$};
  \node[anchor=west,align=left] at (.05,-.86)
    {$H-P<d_M\leq d_g\leq H$;\quad every early $A/B$-job is on time.};
  \node[anchor=west] at (.05,-1.34)
    {Every tail job starts at or after $H$, so its completion time $C_g\geq H+P>d_g$ and its tardiness $T_g\geq P$.};
\end{scope}

\draw[black!18] (0,-10.95)--(16.4,-10.95);
\node[anchor=west,font=\small\bfseries] at (0,-11.34) {Exact release and due dates};
\node[anchor=north west,inner sep=0pt] at (.05,-11.65) {%
  \renewcommand{\arraystretch}{1.22}%
  \setlength{\tabcolsep}{7pt}%
  \begin{tabular*}{16.15cm}{@{\extracolsep{\fill}}llll@{}}
    Job & Release date & Due date & Weight\\\hline
    $A_{g_k}$ & $b_i+(k-1)P$ & $H-(g_k-1)L$ & $U+\ell_{e(g_k)}$\\
    $B_{g_k}$ & $b_i+(k-1)P+1$ & $H-(g_k-1)L$ & $U+h_{e(g_k)}$\\
    $Z_i$ & $b_i+mP$ & $b_i+(m+1)P$ & $F_i$\\
    $R_i$ & $b_i+(m+1)P+1$ & $b_{i+1}=b_i+(m+2)P+1$ & $\overline K+1$
  \end{tabular*}};
\end{tikzpicture}
\par\vspace{2pt}
\parbox{\linewidth}{\footnotesize\textit{Scale convention.}
All processing rectangles in (a) and (b) have the same width, representing $P$;
the one-unit offsets are enlarged for visibility and the ellipses omit middle
incidences. Panel (c) has a separate scale; the $L$ spacing is schematic.
All displayed time labels and formulas are exact.}
\caption{The two normalized vertex modes in Lemma~\ref{lem:consistency}.
In the $A$-mode the charge is on time and the one-unit idle interval precedes
the guard. In the $B$-mode that idle interval is at the start of the block,
all early jobs and the charge start one unit later, and the charge costs $F_i$.
The guard has identical start and completion times in both modes, so no delay
passes to the next block. Complementary jobs remain in the tail; they need
not be consecutive there. Panel (c) shows why all early jobs are on time and
all tail jobs are tardy. The tail is shown compacted from $H$.}
\label{fig:lemma5-modes}
\endgroup
\end{figure}
\afterpage{\clearpage}

\begin{proof}
There are $M$ early $A/B$-jobs. Each of the $n$ windows can fit at most $m$ jobs by Lemma~\ref{lem:controls} and the window-capacity paragraph following it. Since $M=nm$, the $M$ early jobs saturate all $n$ windows, so every window contains exactly $m$ jobs. Let $a_g=b_i+(k-1)P$ for $g=(i-1)m+k$, with $1\le i\le n$ and $1\le k\le m$. By the charge-start bound in Lemma~\ref{lem:controls}, the job in local position $k$ and its $m-k$ successors must finish by $b_i+mP+1$. Its start $s$ therefore satisfies $s\le b_i+mP+1-(m-k+1)P=a_g+1$. By \eqref{eq:blocks} and Table~\ref{tab:jobs}, consecutive $a_g$ values differ by $P$ within a block and by $3P+1$ between blocks. Thus $a_{g+1}-a_g\ge P>1$, using \eqref{eq:length}. Hence a job of rank $h>g$ is released at or after $a_h\ge a_g+P>a_g+1$ and cannot occupy this position. If $\rho_g$ is the incidence rank of the early $A/B$-job in chronological position $g$, it follows that $\rho_g\le g$. Exactly one early member of every job pair is present, so $(\rho_1,\ldots,\rho_M)$ is a permutation of $1,\ldots,M$. Therefore $\sum_{g=1}^{M}(g-\rho_g)=0$, while every summand is nonnegative, giving $\rho_g=g$ for every $g$. Each job pair contributes its early member in its own window and position; this conclusion did not assume that the initial early schedule had no idle time.

Left shift the resulting permutation using \eqref{eq:leftshift}. Cost cannot increase; each guard remains at its release by Lemma~\ref{lem:controls}. The order relative to the guards is unchanged, so the early $A/B$-jobs remain in their own windows. For $i>1$, the job preceding the first such job of vertex $i$ is guard $R_{i-1}$, which completes at $b_i$; for $i=1$, the machine is initially available at $b_1=0$. Thus no delay is inherited from the preceding block. Write a local earliest start as $b_i+(k-1)P+\varepsilon_k$. The recurrence is
\[
\varepsilon_0=0,\qquad
\varepsilon_k=\max\{\varepsilon_{k-1},z_k\}\quad(1\le k\le m),
\]
where $z_k=0$ for a selected $A$ and $z_k=1$ for a selected $B$. Thus $\varepsilon_k=\max_{1\le h\le k}z_h\in\{0,1\}$. The final early completion is $b_i+mP+\varepsilon_m$, so
\[
s_{Z_i}=b_i+mP+\varepsilon_m,\qquad
T_{Z_i}=\varepsilon_m.
\]
% A8.4.4: consolidate repeated maximum-offset consequences
Thus the charge contribution is $F_i\varepsilon_m$: it is zero when all selected jobs are $A$-jobs and equals $F_i$ as soon as a $B$-job is selected.
The guard following $Z_i$ starts at
\[
\max\{b_i+(m+1)P+1,\ b_i+(m+1)P+\varepsilon_m\}
=b_i+(m+1)P+1,
\]
since $\varepsilon_m\le1$. It completes at $b_{i+1}$, which also verifies the block-boundary reset directly. The recurrence takes a maximum of the release offsets, so repeated $B$ selections do not add further one-unit delays.

 Thus the recurrence gives the two mode timings: in the $A$-mode,
$\varepsilon_k=0$ for every $k$, so one unit of idle time remains immediately
before the guard; in the $B$-mode, $\varepsilon_k=1$ for every $k$, so the
charge job finishes exactly when the guard starts. 

 If a vertex has both $A$-jobs and $B$-jobs early, let $\mathcal A_i$ be
the nonempty set of its incidence ranks whose $A_g$ is early. %If a vertex uses both kinds, let $\mathcal A_i$ be the nonempty set of its incidence ranks whose $A_g$ is early. 
 For each $g\in\mathcal A_i$, move $B_g$ from the tail into the early
position of $A_g$, and move $A_g$ to the old tail slot of $B_g$. Schedule
the resulting early $B$-job at vertex $i$ at the start
$b_i+(k-1)P+1$, where $g=(i-1)m+k$. Charge job $Z_i$ already contributed $F_i$ to total weighted tardiness and continues to do so. No other window or guard is affected. In each corresponding tail slot, replace $B_g$ by $A_g$, retaining its completion time $C_g$. Both jobs have due date $d_g=H-(g-1)L$, both are released before $H$, and $C_g\ge H+P>d_g$. The exact reduction in total cost is therefore
\begin{equation}
\sum_{g\in\mathcal A_i}
\bigl[(U+h_{e(g)})-(U+\ell_{e(g)})\bigr](C_g-d_g)
=\sum_{g\in\mathcal A_i}\Delta_{e(g)}(C_g-d_g)
\ge P\sum_{g\in\mathcal A_i}\Delta_{e(g)}>0.
\label{eq:mixedsaving}
\end{equation}
All charge-job contributions remain unchanged, and every replacement is within one job pair $(A_g,B_g)$ with a common due date. Thus this strict saving follows directly from positive coefficients and positive tail tardiness; the due-date-gap margin used in Lemma~\ref{lem:exchange} is unnecessary here. Apply this operation once to every vertex using both job types. The resulting schedule has the claimed modes, with all charge and guard placements and all releases respected. The transformation is polynomial and does not require optimality.
\end{proof}

Call a schedule \emph{canonical} if every vertex uses one of the two modes of Lemma~\ref{lem:consistency}, its charge starts immediately after its early $A/B$-jobs, every guard starts at its release, and all remaining $A/B$-jobs run consecutively from $H$ in nonincreasing weight order.  The early job at local position $k$ of vertex $i$ starts at $b_i+(k-1)P$ in the $A$-mode, or at $b_i+(k-1)P+1$ in the $B$-mode, for $1\le i\le n$ and $1\le k\le m$. Jobs in the tail with  equal weights may be ordered arbitrarily.

\begin{proposition}[Normalization of arbitrary bounded-cost schedules]\label{prop:normalization}
Every feasible schedule of cost at most $\UB$, including one with noninteger starts, admits a canonical schedule of no greater cost. Given its job permutation, such a schedule can be constructed in polynomial time.
\end{proposition}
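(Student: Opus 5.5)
The plan is to chain the already established lemmas, checking at each step that the cost bound $\UB$ is preserved so the next lemma applies. Start from the given feasible schedule with cost at most $\UB$ and possibly noninteger starts. Apply Lemma~\ref{lem:exchange}; its proof begins with the left shift \eqref{eq:leftshift} of the job permutation and compaction of the tail from $H$. The result is an integral feasible schedule of no greater cost, hence still at most $\UB$, with exactly one early member of each pair $(A_g,B_g)$. Since its cost is at most $\UB$, Lemma~\ref{lem:controls} applies: every guard starts at its release, and every charge starts at $b_i+mP$ or $b_i+mP+1$. This schedule therefore meets the hypotheses of Lemma~\ref{lem:consistency}.

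Next, apply Lemma~\ref{lem:consistency}. It yields, without increasing cost, a schedule in which every vertex is in $A$-mode or $B$-mode. The early jobs are in their own windows and positions (the rank-forcing argument $\rho_g=g$). Each charge starts immediately after its early jobs, at $b_i+mP+\varepsilon_m$, and each guard starts at its release. I would briefly verify that the resulting starts are exactly those in the definition of canonical: $b_i+(k-1)P$ in $A$-mode and $b_i+(k-1)P+1$ in $B$-mode. This follows from the recurrence $\varepsilon_k=\max_{h\le k}z_h$, which is constant in a pure mode.

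The remaining step is to put the tail into canonical form. The tail set $\mathcal T$ of $M$ jobs is now fixed. Every tail job is released before $H$ by \eqref{eq:abrelease}, and the last guard ends at $H$. So any order of $\mathcal T$ packed consecutively from $H$ is feasible and leaves the early part untouched. Every job in $\mathcal T$ completes at or after $H+P>d_g$, because $d_g\le H$ by \eqref{eq:duebound} and Table~\ref{tab:jobs}. Hence its weighted tardiness equals $w_jC_j-w_jd_j$, and the tail cost is $\sum_{\mathcal T}w_jC_j$ minus a constant. Packing from $H$ first weakly decreases every completion time, exactly as in the compaction before Lemma~\ref{lem:matching}. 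Among consecutive orders on a common time grid, an adjacent interchange argument (Smith's rule with equal processing times) shows that nonincreasing weight order minimizes $\sum w_jC_j$, so sorting does not increase cost. The result is canonical with cost at most the original.

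The step needing the most care is this tail argument. One must check that compaction and sorting keep every tail job tardy, so that the linear form of the cost is valid, and that neither operation interacts with the charge or guard positions. Both facts rest on $d_g\le H$ and on all releases being below $H$. Polynomiality then follows from several facts. The left shift is linear in $N$. Lemma~\ref{lem:exchange} performs at most $M$ polynomial repairs. Lemma~\ref{lem:consistency} performs one pass over the vertices. Sorting takes $O(M\log M)$. No step uses optimality, and every step needs only the job permutation and the instance data.
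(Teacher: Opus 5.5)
Your proposal is correct and follows the paper's own proof essentially step for step. You apply Lemma~\ref{lem:exchange} (which begins with the left shift and tail compaction), then Lemma~\ref{lem:consistency}, then compact the fixed tail from $H$ and sort it in nonincreasing weight order by the adjacent-interchange (Smith's rule) argument, using that every tail job is released before $H$ and remains tardy; your polynomial-time accounting of at most $M$ repairs, one pass over the vertices, and one sort also matches the paper.
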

\begin{proof}
Apply Lemma~\ref{lem:exchange}.   This first left-shifts the input schedule,
if necessary, and then produces an integral schedule of no greater cost with
exactly one early member of each pair $(A_g,B_g)$. Next apply
Lemma~\ref{lem:consistency}, which converts the schedule, again without
increasing cost, so that every vertex uses one of the two modes and the
charge and guard placements have the stated form. 

It remains to arrange the tail. Compact the tail from $H$ and sort it in
nonincreasing weight order. Every tail job is already released and has due
date at most $H$, so it remains tardy in every tail position. For the fixed
tail set $\mathcal T$, its objective is
\[
\sum_{j\in\mathcal T}w_jC_j-\sum_{j\in\mathcal T}w_jd_j.
\]
The second term is independent of order. If adjacent jobs $a,b$ have
$w_a<w_b$, swapping them changes the weighted completion part by
$P(w_a-w_b)<0$. Ties change it by zero. Consequently nonincreasing weights
minimize the tail cost; this is the equal-processing-time specialization of
Smith's rule \citep{smith1956}. Sorting changes neither feasibility nor any
early contribution.

There are at most $M$ exchanges for missing early members of the job pairs,
at most $n$ vertex-mode conversions, and one tail sort. All computations use
only the input permutation and polynomial-bit data from the reduction; the
possibly noninteger original start times need not be encoded or manipulated.
\end{proof}

\section{Cost identity and strong NP-completeness}\label{sec:identity}
% A9.4 / N2 and A8.4.4: define C(x), consolidate tail-order proof

Throughout this section, schedules are canonical in the sense of
Proposition~\ref{prop:normalization}. For each vertex $i\in\mathcal V$,
set $x_i=1$ if vertex $i$ uses the $A$-mode, so that all its $A$-jobs are
early and the corresponding $B$-jobs are in the tail. Set $x_i=0$ if
vertex $i$ uses the $B$-mode, so that all its $B$-jobs are early and the
corresponding $A$-jobs are in the tail. Write
$x=(x_i)_{i\in\mathcal V}$. Conversely, every vector
$x\in\{0,1\}^n$ defines a feasible canonical schedule: use the mode timings
established in Lemma~\ref{lem:consistency}, then process the remaining
$A/B$-jobs consecutively from $H$ in nonincreasing weight order. All tail
jobs are released before $H$ by \eqref{eq:abrelease}. For the fixed tail
set, Proposition~\ref{prop:normalization} shows that this order minimizes
the tail contribution, with equal-weight ties irrelevant.

By Lemma~\ref{lem:consistency}, the charge-job contribution at vertex $i$
is $F_i(1-x_i)$. Early $A/B$-jobs and guards contribute zero by
Lemma~\ref{lem:controls} and the early/tail separation established after
its proof. Let $C(x)$ denote the total weighted tardiness of the canonical
schedule associated with $x$; the order of equal-weight tail jobs does not
affect this value.

\begin{proposition}[Cost identity for positive edge coefficients]\label{prop:coefficients}
Let $n\ge2$, and assign to every edge $e\in\mathcal E$ of the complete graph $(\mathcal V,\mathcal E)$ a positive integer coefficient $\Delta_e$. Using the coefficient array $(\Delta_e)_{e\in\mathcal E}$, define the job construction by \eqref{eq:intervals}--\eqref{eq:blocks} and Table~\ref{tab:jobs}, without imposing the graph-specific choice \eqref{eq:graphcoefficients} or a source threshold. Recompute all derived parameters from this coefficient array, including $U,G,Q,P,L,F_i$, and $\UB$. The normalization proposition remains valid, and the resulting scheduling instance has optimum
\[
\OPT=\UB-\frac P2\max_{x\in\{0,1\}^n}
\sum_{1\le a<b\le n}\Delta_{\{a,b\}}|x_a-x_b|.
\]
\end{proposition}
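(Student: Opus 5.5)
The plan has three steps: show that the normalization machinery of Section~\ref{sec:normalization} survives an arbitrary positive coefficient array, compute $C(x)$ for every canonical schedule, and then read off $\OPT$.

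\textbf{Step 1: the normalization still applies.} Lemma~\ref{lem:parameter-bounds} uses only the definitions \eqref{eq:U}--\eqref{eq:actual}, positivity of $f_{\min}$, and $M\ge2$. Positivity of $f_v$ holds because every $\Delta_e$ is a positive integer, $q_e$ is even and nonnegative, and $P_0$ is even. The reference schedule also uses only the fact that the intervals $[\ell_e,h_e]$ from \eqref{eq:intervals} are disjoint and increase with $t_e$, so it still shows $\OPT\le\UB$. Lemmas~\ref{lem:controls}--\ref{lem:consistency} use only these bounds together with $\Delta_e>0$; the latter enters in \eqref{eq:mixedsaving} and in $1\le v_D\le V$. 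Nowhere is the graph-specific choice \eqref{eq:graphcoefficients} used. Hence Proposition~\ref{prop:normalization} holds verbatim. Now take any optimal schedule; its cost is at most $\UB$, so it can be converted to a canonical schedule of no greater cost. Conversely, every $x\in\{0,1\}^n$ yields a feasible canonical schedule. Therefore $\OPT=\min_x C(x)$, and it remains to show
\[
C(x)=\UB-\frac P2\sum_{a<b}\Delta_{\{a,b\}}|x_a-x_b|.
\]

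\textbf{Step 2: the cost decomposes edge by edge.} For each edge $e=\{a,b\}$, the tail contains exactly one job from incidence $(a,e)$ and one from $(b,e)$. The job from $(v,e)$ is $B_{g_{v,e}}$ if $x_v=1$ and $A_{g_{v,e}}$ if $x_v=0$. Its weight is therefore $U+h_e-\Delta_e(1-x_v)$, which lies in $[U+\ell_e,U+h_e]$.

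Because these weight intervals are disjoint and increase with $t_e$, the nonincreasing-weight tail places the two jobs of edge $e$ at positions $q_e+1$ and $q_e+2$ for every $x$. Their completion times are thus $H+(q_e+1)P$ and $H+(q_e+2)P$. Both jobs are tardy, with due dates $H-(g_{v,e}-1)L$. Within the edge, the heavier job comes first.

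I will compare each edge's cost with its reference term in $G$, scaled by $Q$, which corresponds to $x_a=x_b=1$. A job with $x_v=0$ loses $\Delta_e(C-d)$ relative to that reference. This gives three cases:
\begin{itemize}
\item If $x_a=x_b=0$, the loss is $\Delta_e[P(2q_e+3)+L(g_{a,e}+g_{b,e}-2)]$.
\item If $x_a=0$ and $x_b=1$, the lighter $A$-job of $a$ sits second, and the loss is $\Delta_e[P(q_e+2)+L(g_{a,e}-1)]$.
\item If $x_a=1$ and $x_b=0$, the case is symmetric with $b$ in place of $a$.
\end{itemize}
All three cases are captured by the single formula
\[
\Delta_e\sum_{v\in e}(1-x_v)\Bigl[P\bigl(q_e+\tfrac32\bigr)+L(g_{v,e}-1)\Bigr]+\frac P2\Delta_e|x_a-x_b|.
\]

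\textbf{Step 3: summation and cancellation.} I will sum this loss over all edges and regroup the first term by vertex. By \eqref{eq:fees}, and using $P=QP_0$ and $L=QL_0$, the first term contributes exactly $\sum_v(1-x_v)Qf_v=\sum_v(1-x_v)F_v$. The tail cost is therefore $\UB-\sum_v(1-x_v)F_v-\frac P2\sum_e\Delta_e|x_a-x_b|$. Adding the charge contributions $F_v(1-x_v)$ from Lemma~\ref{lem:consistency} cancels the single-vertex terms, which gives the displayed formula for $C(x)$. Minimizing over $x$ then yields the stated $\OPT$.

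The main obstacle is the case check in Step 2. It requires verifying that the tail positions of each edge's pair are independent of $x$, which follows from the disjointness of the intervals including their endpoints. It also requires verifying that in the mixed case the heavier job precedes the lighter one, since that order is exactly what produces the extra saving of $P/2$ per unit of $\Delta_e$.
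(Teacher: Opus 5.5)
Your proposal is correct and follows essentially the same route as the paper. You re-establish the parameter margins for an arbitrary positive coefficient array so that normalization still applies, fix each edge's two tail jobs at positions $q_e+1$ and $q_e+2$, and let the charge weights $F_v$ cancel the single-endpoint terms, leaving the $\frac P2\Delta_e|x_a-x_b|$ saving; the only cosmetic difference is that the paper allocates each $F_v$ to its incident edges and uses $\max\{x_a,x_b\}=(\sigma_e+\eta_e)/2$ and $\min\{x_a,x_b\}=(\sigma_e-\eta_e)/2$, whereas you compare each edge's tail cost with its reference term by cases and regroup the single-vertex losses by vertex.
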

\begin{proof}
 Equations \eqref{eq:intervals}--\eqref{eq:V} give $1\le\ell_e<h_e\le V$ and a one-unit gap between successive intervals. The definitions $q_e=M-2t_e$ and $M=2E$ give $q_e=2(E-t_e)\ge0$. Also $P_0=2M$ is even, $g_{i,e}-1\ge0$, and each of the $n-1$ incident coefficients is a positive integer and hence at least one. Thus \eqref{eq:fees} gives
\[
f_i\ge \frac32 P_0\sum_{e\in\mathcal E_i}\Delta_e
\ge \frac32 P_0(n-1)>0.
\]
% A9.2 / M1,R1: correct parameter ranges and reference
Thus $f_{\min}>0$, so $Q$ is well defined. The quantities $U,G,Q,P,L,F_i$, and $\UB$ are positive integers. The quantities $q_e$, block boundaries, and release dates are nonnegative integers in the ranges already specified. With $\delta=2P_0f_{\min}Q-G$, the identities established in Lemma~\ref{lem:parameter-bounds} give the following strict margins.
\begin{align*}
P&=ML>1,\\
\left(\min_{1\le i\le n}F_i\right)(2P+1)-\UB&=Q(\delta+f_{\min})>0,\\
UL-V(2M+1)P-\FS&=Q(U+1)>0,\\
PU-\FS&=Q\left(P_0U-\sum_{i=1}^{n} f_i\right)>0.
\end{align*}
Here $1\le\delta\le2P_0f_{\min}$ follows from the floor definition of the recomputed $Q$.  This is \eqref{eq:Q}; see the calculation of $\delta$ in the proof of Lemma~\ref{lem:parameter-bounds}. The charge and exchange equalities above are \eqref{eq:charge-margin} and \eqref{eq:exchange-margin}, reapplied to the recomputed parameters. No graph-specific coefficient bound occurs in these identities.

The release and due-date identities \eqref{eq:abrelease} and \eqref{eq:duebound} depend only on $P=ML$ and the block definitions. The guard weight remains $\UB+1$. Thus Lemma~\ref{lem:controls} applies, compaction gives \eqref{eq:tailbound},  and Lemmas~\ref{lem:matching}, \ref{lem:exchange}, and \ref{lem:consistency} remain valid: they use positive interval endpoints, the release/due-date identities, and inequalities \eqref{eq:length}--\eqref{eq:addbound}, all re-established above for the new coefficients. In particular, Lemma~\ref{lem:consistency} uses only the strict positivity of each $\Delta_e$. The reference schedule in which every vertex uses the $A$-mode still has cost $\UB$, by its construction and evaluation in Section~\ref{sec:construction-bounds}, using \eqref{eq:G} and \eqref{eq:actual}. An optimum exists by the finite earliest-permutation argument following \eqref{eq:leftshift}; the reference schedule places its cost within the normalization bound. Proposition~\ref{prop:normalization} therefore applies to the supplied coefficient array.

Fix $e=\{a,b\}\in\mathcal E$ with $a<b$ and abbreviate $g_a=g_{a,e}$, $g_b=g_{b,e}$, $q=q_e$, $\Delta=\Delta_e$, and $u=U+\ell_e$. The two tail weights are $u+\Delta x_a$ and $u+\Delta x_b$. This follows from the mode definition at the start of this section, the weights in Table~\ref{tab:jobs}, and $h_e=\ell_e+\Delta_e$ in \eqref{eq:intervals}.  By the canonical weight order in Proposition~\ref{prop:normalization} and the disjoint ordered intervals in \eqref{eq:intervals}, the $E-t_e$ higher intervals precede these two jobs. Each higher interval contains exactly two tail jobs by Lemma~\ref{lem:consistency}; hence precisely $2(E-t_e)=q_e=q$ tail jobs precede them, independently of the bits. The two completions are $H+(q+1)P$ and $H+(q+2)P$. Table~\ref{tab:jobs} gives $d_{g_a}=H-(g_a-1)L$ and $d_{g_b}=H-(g_b-1)L$, and the larger weight goes first by Proposition~\ref{prop:normalization}. Their combined tail cost is
\begin{equation}
\begin{aligned}
&P\bigl[(q+1)\max\{u+\Delta x_a,u+\Delta x_b\}\\
&\hspace{12mm}+(q+2)\min\{u+\Delta x_a,u+\Delta x_b\}\bigr]\\
&\quad+L\bigl[(g_a-1)(u+\Delta x_a)+(g_b-1)(u+\Delta x_b)\bigr].
\end{aligned}
\label{eq:edge-tail}
\end{equation}
By \eqref{eq:fees} and the scaling definitions \eqref{eq:actual},
\[
F_v=\sum_{e\in\mathcal E_v}\Delta_e
\left[P\left(q_e+\frac32\right)+L(g_{v,e}-1)\right].
\]
Lemma~\ref{lem:consistency} multiplies each endpoint's summand by $1-x_v$. We therefore allocate to edge $e$ the following portions of the charge contributions of $Z_a$ and $Z_b$:
\begin{equation}
\begin{aligned}
&\Delta\left[P\left(q+\frac32\right)+L(g_a-1)\right](1-x_a)\\
&\quad+\Delta\left[P\left(q+\frac32\right)+L(g_b-1)\right](1-x_b).
\end{aligned}
\label{eq:edge-charge}
\end{equation}
 Define
\begin{equation}
\begin{aligned}
W_e&=P(2q+3)+L(g_a+g_b-2),\\
D_e&=(u+\Delta)W_e\\
&=(U+h_e)\bigl[P(2q_e+3)+L(g_{a,e}+g_{b,e}-2)\bigr],
\end{aligned}
\label{eq:edge-baseline}
\end{equation}
and define $\Phi_e(x_a,x_b)$ as the sum of the tail contribution \eqref{eq:edge-tail} and the allocated charge contribution \eqref{eq:edge-charge}.
To expose the cancellation, put $\sigma_e=x_a+x_b$ and $\eta_e=|x_a-x_b|$. Since the modes are binary,
\[
\max\{x_a,x_b\}=\frac{\sigma_e+\eta_e}{2},\qquad
\min\{x_a,x_b\}=\frac{\sigma_e-\eta_e}{2}.
\]
Substituting these binary max/min identities into \eqref{eq:edge-tail}, and using \eqref{eq:edge-baseline}, gives the tail contribution
\[
uW_e+\frac{P\Delta}{2}\bigl[(2q+3)\sigma_e-\eta_e\bigr]
 +L\Delta\bigl[(g_a-1)x_a+(g_b-1)x_b\bigr].
\]
 Expanding \eqref{eq:edge-charge} with the same definition of $W_e$ gives the allocated charge contribution
\[
\Delta W_e-\frac{P\Delta}{2}(2q+3)\sigma_e
 -L\Delta\bigl[(g_a-1)x_a+(g_b-1)x_b\bigr].
\]
Thus the linear mode terms, including their incidence-specific due-date offsets, cancel exactly, and
 \begin{equation}
\begin{aligned}
\Phi_e(x_a,x_b)&=(u+\Delta)W_e-\frac{P\Delta}{2}\eta_e\\
&=D_e-\frac{P\Delta_e}{2}|x_a-x_b|.
\end{aligned}
\label{eq:edge-cost}
\end{equation}
The charge coefficients therefore leave only a constant and the saving for unequal endpoint modes.

% A8.4.5: original four-case expansions are proposed deletions, not removed.

In the equal-weight cases, either order gives the same combined cost even when $g_a\ne g_b$, because the weighted due-date sum is unchanged. As in the fixed-tail interchange argument in Proposition~\ref{prop:normalization}, when $w_a=w_b$, both $w_aC_a+w_bC_b$ and $w_ad_a+w_bd_b$ are unchanged by interchanging the jobs. Table~\ref{tab:cases}   records the resulting four values.

\begin{table}[htbp]
\centering
\caption{Edge contribution after normalization; values follow from
\eqref{eq:edge-cost}.}\label{tab:cases}
\setstretch{1.15}
\begin{tabular}{ccc}
\toprule
$x_a$ & $x_b$ & Edge contribution\\
\midrule
0 & 0 & $D_e$\\
0 & 1 & $D_e-P\Delta_e/2$\\
1 & 0 & $D_e-P\Delta_e/2$\\
1 & 1 & $D_e$\\
\bottomrule
\end{tabular}
\end{table}

 Every charge-job contribution has been allocated exactly once: for each vertex $v$, summing its endpoint allocations over $e\in\mathcal E_v$ gives $Qf_v(1-x_v)=F_v(1-x_v)$.  Indeed, \eqref{eq:edge-charge} includes one summand for each incident edge; summing these at $v$ reproduces \eqref{eq:fees} scaled by \eqref{eq:actual}. Early $A/B$-jobs and guards have zero cost, and $\sum_{e\in\mathcal E}D_e=QG=\UB$  by \eqref{eq:edge-baseline}, \eqref{eq:G}, and \eqref{eq:actual}; the factor $Q$ enters through both $P=QP_0$ and $L=QL_0$.  Summing \eqref{eq:edge-cost} over all $e\in\mathcal E$ now gives the complete schedule cost: Lemma~\ref{lem:consistency} and the early/tail separation leave no additional nonzero contributions.
\begin{equation}
C(x)=\UB-\frac P2\sum_{1\le a<b\le n}\Delta_{\{a,b\}}|x_a-x_b|.
\label{eq:cutcost}
\end{equation}
 Every bit vector defines a feasible canonical schedule by the mode-and-tail construction at the start of Section~\ref{sec:identity}. An optimum has cost at most $\UB$ by the reference schedule in Section~\ref{sec:construction-bounds}, and hence can be normalized by Proposition~\ref{prop:normalization}. Thus the optimum is the minimum of \eqref{eq:cutcost} over
$x\in\{0,1\}^n$, which is
\[
\UB-\frac P2\max_{x\in\{0,1\}^n}
\sum_{1\le a<b\le n}\Delta_{\{a,b\}}|x_a-x_b|.
\]
This proves the proposition.
\end{proof}

 Proposition~\ref{prop:coefficients} requires a strictly positive integer
coefficient on every edge of the auxiliary complete graph. It does not
apply, as stated, to zero, negative, or arbitrary real coefficients. Each
coefficient $\Delta_e$ is a construction parameter: by \eqref{eq:intervals}
it determines the interval width $h_e-\ell_e$; by Table~\ref{tab:jobs} it
determines the weight difference between the $A$- and $B$-jobs for both
endpoint incidences; and by \eqref{eq:fees} and \eqref{eq:actual} it
determines the corresponding portions of the endpoint charge weights.

 We now return to the specific coefficient choice in \eqref{eq:graphcoefficients}:
original graph edges have coefficient $\Gamma+1$, while added auxiliary
edges have coefficient $1$. 

\begin{theorem}\label{thm:main}
The decision version of $\problem$ is strongly NP-complete, even under $d_j\ge r_j+p$ for every job. In the reduction, all numerical data, including the decision threshold, are bounded by a polynomial in the number of jobs.
\end{theorem}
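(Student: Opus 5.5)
We prove membership in NP, then the threshold equivalence, then the numerical bounds and the side condition $d_j\ge r_j+p$.

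\emph{Membership in NP.} By the left-shift argument following \eqref{eq:leftshift}, a yes-instance always has a witness that is a job permutation whose earliest schedule has integer starts. Its cost can be evaluated in polynomial time and compared with $K$.

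\emph{Hardness.} Given a MAX-CUT instance $(\mathcal G,\kappa)$ with no isolated vertices and $1\le\kappa\le|\mathcal F|$, output the instance of Section~\ref{sec:construction-data} with threshold $K_*$ from \eqref{eq:threshold}. Since $P=QP_0=2QM$ is even, $K_*$ is an integer. By Proposition~\ref{prop:coefficients} with the coefficients \eqref{eq:graphcoefficients}, the optimum is $\UB-(P/2)\max_x\Psi(x)$, where
\[
\Psi(x)=\sum_{a<b}\Delta_{\{a,b\}}|x_a-x_b|=\Gamma\,c_{\mathcal G}(x)+c_{K_n}(x),
\]
and $c_{\mathcal G}(x)$, $c_{K_n}(x)$ count the original and auxiliary cut edges. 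Hence a schedule of cost at most $K_*$ exists iff some $x$ has $\Psi(x)\ge\Gamma\kappa$.

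The key step is that $\Psi(x)\ge\Gamma\kappa$ holds iff $c_{\mathcal G}(x)\ge\kappa$. For the forward direction, $c_{K_n}(x)\le\lfloor n^2/4\rfloor<\Gamma$. So $\Gamma c_{\mathcal G}(x)>\Gamma\kappa-\Gamma$, which gives $c_{\mathcal G}(x)>\kappa-1$ and, by integrality, $c_{\mathcal G}(x)\ge\kappa$. The reverse direction is immediate since $c_{K_n}\ge0$. Thus yes-instances correspond exactly, and the quadratic job count $2n^2$ makes the construction polynomial.

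\emph{Strong sense.} Bound every parameter by a polynomial in $n$, and hence in $N=2n^2$:
\begin{itemize}
\item $\Delta_e=O(n^2)$, so $V=O(n^4)$.
\item $f_v=O(n\cdot n^2\cdot n^2\cdot n^2)=O(n^7)$, so $U=O(n^8)$.
\item $G=O(n^2\cdot U\cdot n^4)=O(n^{14})$.
\item $Q\le G/(2P_0f_{\min})+1$. Since $f_{\min}\ge\frac32P_0(n-1)$, this gives $Q=O(n^{14})$.
\item $P$, $L$, $F_v$, $\UB=QG$, $H=O(n^2P)$, the guard weight $\UB+1$, and $K_*$ are all polynomial, roughly $O(n^{28})$.
\end{itemize}
Releases and due dates are at most $H$ plus small terms. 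Because all numbers are polynomially bounded, MAX-CUT's NP-completeness \citep{gjs1976} transfers to strong NP-completeness. The condition $d_j\ge r_j+p$ was established after \eqref{eq:duebound}.

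The only delicate step is the separation inequality $c_{K_n}<\Gamma$. It follows from the fact that a cut of $K_n$ has at most $\lfloor n^2/4\rfloor$ edges, by maximizing $k(n-k)$. The rest is bookkeeping, with care that $Q$'s floor does not blow up the bounds.
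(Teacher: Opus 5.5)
Your proof is correct and follows essentially the same route as the paper. It uses the optimum formula of Proposition~\ref{prop:coefficients}, the separation $s(x)(n-s(x))\le\lfloor n^2/4\rfloor=\Gamma-1$ (which makes $\Gamma c_{\mathcal G}(x)+s(x)(n-s(x))\ge\Gamma\kappa$ equivalent to $c_{\mathcal G}(x)\ge\kappa$), integrality of $K_*$ because $P$ is even, the same magnitude chain up to $O(n^{28})$, and a permutation certificate for membership in NP.

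The one check the paper makes that you omit is $K_*\ge0$, which the decision problem requires since its threshold must be a nonnegative integer. It follows from $\UB\ge UP\,M(M+1)/2>P\Gamma\kappa/2$ together with $\Gamma\kappa\le\Gamma|\mathcal F|\le V<U$.
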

\begin{proof}
\medskip\noindent\textit{\underline{Correctness of the reduction.}}\quad
Let $c_{\mathcal G}(x)=\sum_{\substack{\{a,b\}\in\mathcal F\\a<b}}|x_a-x_b|$ be the number of original edges cut by $x$, and let $s(x)=\sum_{i=1}^{n}x_i$ be the size of one side.  Substituting \eqref{eq:graphcoefficients}, the coefficient $1$ assigned to
every auxiliary complete-graph edge contributes $s(x)(n-s(x))$, the number
of crossing pairs in the complete graph. The additional $\Gamma$ assigned to
original graph edges contributes $\Gamma c_{\mathcal G}(x)$. The upper bound
$s(x)(n-s(x))\le\lfloor n^2/4\rfloor=\Gamma-1$ follows from the definition
of $\Gamma$ in \eqref{eq:graphcoefficients}. Thus
\begin{equation}
\sum_{1\le a<b\le n}\Delta_{\{a,b\}}|x_a-x_b|
=\Gamma c_{\mathcal G}(x)+s(x)(n-s(x)),
\qquad 0\le s(x)(n-s(x))\le\Gamma-1.
\label{eq:graphscore}
\end{equation}
Combining \eqref{eq:threshold}, \eqref{eq:cutcost}, and \eqref{eq:graphscore} gives
\begin{equation}
C(x)-K_*=\frac P2\bigl[\Gamma(\kappa-c_{\mathcal G}(x))-s(x)(n-s(x))\bigr].
\label{eq:cutgap}
\end{equation}

For the forward direction, suppose the MAX-CUT instance is YES and choose $x$ with $c_{\mathcal G}(x)\ge\kappa$.  The mode timings of Lemma~\ref{lem:consistency}, followed by the tail construction at the start of Section~\ref{sec:identity}, give a feasible canonical schedule. Equation~\eqref{eq:cutgap} gives $C(x)\le K_*$, since $\kappa-c_{\mathcal G}(x)\le0$ and $s(x)(n-s(x))\ge0$.

For the reverse direction, suppose the constructed scheduling instance
has a feasible schedule with cost at most $K_*$. By
\eqref{eq:threshold},
\[
K_* = \UB-\frac{P}{2}\Gamma\kappa \le \UB,
\]
since $P>0$, $\Gamma>0$, and $\kappa\ge1$.
Thus the schedule satisfies the cost bound required by
Proposition~\ref{prop:normalization}, which gives a canonical schedule
of no greater cost. If $x$ denotes its mode vector, then $C(x)\le K_*$.
If its cut had $c_{\mathcal G}(x)\le\kappa-1$, then \eqref{eq:cutgap}, together with $s(x)(n-s(x))\le\Gamma-1$ from \eqref{eq:graphscore}, would imply
\[
C(x)-K_*\ge\frac P2\bigl[\Gamma-(\Gamma-1)\bigr]=\frac P2>0,
\]
a contradiction. Hence $c_{\mathcal G}(x)\ge\kappa$. Therefore, the MAX-CUT instance is YES if and only if the constructed scheduling instance admits a schedule of cost at most $K_*$. 

\medskip\noindent
\textit{\underline{Validity and polynomial bounds for the construction.}}\quad The threshold is integral because $P$ is even.  Indeed, $P=QP_0$ by \eqref{eq:actual} and $P_0=2M$; $\Gamma$ and $\kappa$ are integers as well, so \eqref{eq:threshold} is integral.  For nonnegativity, \eqref{eq:graphcoefficients} and \eqref{eq:V} give $V=2E+\Gamma|\mathcal F|\ge\Gamma\kappa$, since $\kappa\le|\mathcal F|$. Equation~\eqref{eq:U} gives $U>V$. Hence
\[
\Gamma\kappa\le\Gamma|\mathcal F|\le V<U.
\]
  In the reference schedule from Section~\ref{sec:construction-bounds}, every
vertex uses the $A$-mode. The $M$ jobs processed after $H$ have completion
offsets $tP$, for $1\le t\le M$; each of these jobs has weight at least $U$,
and each has due date at most $H$ by Table~\ref{tab:jobs}. %In the all-$A$ reference schedule of Section~\ref{sec:construction-bounds}, the $M$ tail completions have offsets $tP$, each tail weight is at least $U$, and each tail due date is at most $H$ by Table~\ref{tab:jobs}. 
Thus $\UB\ge UP\sum_{t=1}^{M}t=UPM(M+1)/2$. Since $M\ge2$, this exceeds
$P\Gamma\kappa/2$, so $K_*>0$.  Equations \eqref{eq:abrelease} and
\eqref{eq:duebound}, together with Table~\ref{tab:jobs}, show that every
constructed job satisfies $d_j\ge r_j+P$.
%Thus $\UB\ge UP\sum_{t=1}^{M}t=UPM(M+1)/2$. Since $M\ge2$, this exceeds $P\Gamma\kappa/2$, so $K_*>0$ on nontrivial source instances. The fixed preprocessing instances in Section~\ref{par:preprocessing} also have valid nonnegative thresholds. Equations \eqref{eq:abrelease} and \eqref{eq:duebound}, together with Table~\ref{tab:jobs}, show that every constructed job satisfies $d_j\ge r_j+P$; the fixed preprocessing instances described in Section~\ref{par:preprocessing} (one unit job for YES and two such jobs for NO, both with threshold zero) satisfy the same restriction.

The strict inequalities needed for schedule normalization were proved in Lemma~\ref{lem:parameter-bounds}; here only polynomial magnitude is needed.  By \eqref{eq:graphcoefficients} and \eqref{eq:V}, using $E=\binom n2$, $M=n(n-1)$ from Section~\ref{sec:construction-data}, and $\Gamma=O(n^2)$ from \eqref{eq:graphcoefficients},
\[
V=E+\sum_{e\in\mathcal E}\Delta_e
=2E+\Gamma|\mathcal F|
\le(2+\Gamma)\binom n2=O(n^4),\qquad M=O(n^2).
\]
Every $q_e$ and incidence rank is $O(M)$, $P_0=2M$, and $L_0=2$. Thus the coefficient of $\Delta_e$ in \eqref{eq:fees} is $O(M^2)$.  Each edge occurs twice in the incidence sum, and \eqref{eq:V} gives
\[
\sum_{i=1}^{n}\sum_{e\in\mathcal E_i}\Delta_e
=2\sum_{e\in\mathcal E}\Delta_e\le2V.
\]
Consequently \eqref{eq:fees} gives the first bound below, and \eqref{eq:U} gives the second:
\[
\sum_{i=1}^{n}f_i=O(VM^2),\qquad U=O(VM^2).
\]
Each bracket in \eqref{eq:G} is $O(M^2)$, $h_e\le V$ by \eqref{eq:intervals}--\eqref{eq:V}, and there are $E=M/2$ edges by Section~\ref{sec:construction-data}. Thus \eqref{eq:G} gives $G=O((U+V)M^3)$. Equation~\eqref{eq:Q} and positivity of the integers $P_0,f_{\min}$ give
\[
1\le Q=\left\lfloor\frac{G}{2P_0f_{\min}}\right\rfloor+1\le G+1.
\]
Table~\ref{tab:magnitudes} collects explicit degree estimates; the hidden constants are uniform and independent of the source graph.

\begin{table}[htbp]
\centering
\caption{Polynomial magnitude bounds in the number of retained graph vertices}\label{tab:magnitudes}
\setstretch{1.15}
\begin{tabular}{ll}
\toprule
Quantity & Magnitude bound in  $n$\\
\midrule
$\Gamma,\Delta_e,M,P_0,q_e,g_{i,e}$ & $O(n^2)$\\
$V$ & $O(n^4)$\\
$\sum_{i=1}^{n} f_i,\ U$ & $O(n^8)$\\
$G,\ Q,\ L$ & $O(n^{14})$\\
$P$ & $O(n^{16})$\\
$H,\ r_j,\ d_j$ & $O(n^{18})$\\
$F_i$ & $O(n^{22})$\\
$\UB,\ K_*,\ w_j$ & $O(n^{28})$\\
\bottomrule
\end{tabular}
\end{table}

 To verify the scaling rows of Table~\ref{tab:magnitudes}, the preceding estimate gives $G=O((U+V)M^3)=O(n^{14})$. Equation~\eqref{eq:Q} gives $Q\le G+1=O(n^{14})$. Since $P_0=2M$ and $L_0=2$, \eqref{eq:actual} gives $L=2Q=O(n^{14})$ and $P=2MQ=O(n^{16})$.  Next, \eqref{eq:blocks} gives $H=n((n+1)P+1)=O(n^{18})$, and Table~\ref{tab:jobs} bounds all release and due dates by $H$, giving the same exponent. Equation~\eqref{eq:actual} gives $F_i=Qf_i=O(n^{22})$ and $\UB=QG=O(n^{28})$. The threshold argument above gives $0\le K_*\le\UB$, so $K_*=O(n^{28})$.

 The largest weights are the guards' $\UB+1$ from Table~\ref{tab:jobs}. Indeed, \eqref{eq:U}, \eqref{eq:G}, and \eqref{eq:actual} give $F_i=Qf_i<QU<QG=\UB$. Also $U+h_e\le U+V<G\le QG$: for the last edge, $h_e=V$ and $q_e=0$, so its summand in \eqref{eq:G} alone exceeds $U+V$. Thus all nonguard weights are below $\UB+1$. Therefore, every target integer has $O(\log n)$ bits and magnitude $O(n^{28})$.  The job construction in Section~\ref{sec:construction-data} gives $N=2M+2n=2n^2$ jobs: two for each incidence and a charge and guard for each vertex. Hence $O(n^{28})=O(N^{14})$. There are $O(N)$ numerical job entries, each of unary length $O(N^{14})$, and one threshold, giving total unary encoding length $O(N^{15})$.  Since the source graph has no isolated vertices, $n\le2|\mathcal F|$, so this unary output length is polynomial in the graph input length.
%The preprocessing in Section~\ref{par:preprocessing} ensures $n\le2|\mathcal F|$, so this unary output length is polynomial in the graph input length. 
The reduction therefore proves strong NP-hardness.

All parameters can be computed with $O(n^2)$ arithmetic operations, using prefix sums for \eqref{eq:intervals} and direct sums for \eqref{eq:fees} and \eqref{eq:G}. Their operands have polynomial bit length, so standard integer arithmetic, including the division in \eqref{eq:Q}, takes polynomial time. The coefficients $\Delta_e$ specify interval endpoints, not numbers of jobs or time-indexed objects to enumerate.

\medskip\noindent\textit{\underline{Membership in NP.}}\quad
% A9.3 / M2: NP-membership proof for arbitrary inputs
For an arbitrary input instance with $N$ jobs and common processing time $p$, a job permutation is a certificate  using $O(N\log N)$ bits. Compute its earliest schedule by \eqref{eq:leftshift}. Writing $r_{\max}=\max_j r_j$ and $w_{\max}=\max_j w_j$,  induction in \eqref{eq:leftshift} gives $\widehat C_k\le r_{\max}+kp$, so every computed completion time is at most $r_{\max}+Np$, and total weighted tardiness is at most
\[
Nw_{\max}\bigl(r_{\max}+Np+\max_j|d_j|\bigr).
\]
The certificate and these computations have polynomial bit length. If any feasible schedule with the given permutation meets the decision threshold, its earliest schedule does too, including when original start times are noninteger,  by the dominance argument immediately following \eqref{eq:leftshift}. Thus the decision problem belongs to NP. Together with the reduction above, this proves strong NP-completeness.
\end{proof}

\medskip\noindent\textbf{Remark (Interpretation of the optimum).}\quad
Retain the notation from the proof of Theorem~\ref{thm:main}:
$c_{\mathcal G}(x)$ is the number of original graph edges cut by $x$,
and $s(x)=\sum_{i=1}^{n}x_i$ is the size of one side of the cut.
Proposition~\ref{prop:coefficients}, specialized using
\eqref{eq:graphscore}, gives
\begin{equation}
\OPT=\UB-\frac P2\max_{x\in\{0,1\}^n}
\bigl[\Gamma c_{\mathcal G}(x)+s(x)(n-s(x))\bigr].
\label{eq:optcut}
\end{equation}
Thus minimizing the scheduling objective first maximizes the number
of original graph edges cut and, among maximum cuts, maximizes
$s(x)(n-s(x))$. Indeed, by \eqref{eq:graphscore},
$0\le s(x)(n-s(x))\le\Gamma-1$. Losing one original cut edge
therefore reduces the first term in the maximized expression by
$\Gamma$, while the largest possible increase in the second term
is only $\Gamma-1$.
\par

\section{A phase-grid assignment algorithm and its shifted-objective guarantee}
\label{sec:approximation}

This section complements the strong NP-hardness result with a polynomial-time
phase-grid assignment algorithm and a worst-case analysis. The analysis applies to arbitrary
instances of $\problem$ and is independent of the reduction in the preceding
sections.

\subsection{Shifted objective}

For a feasible schedule $S$, write
\[
\begin{aligned}
 F(S)&=\sum_jw_j(C_j-d_j)^+,\\
 W&=\sum_jw_j,\\
 F^*&=\min_S F(S).
\end{aligned}
\]
We analyze the schedule-independent shift
\begin{equation}
\begin{aligned}
 B(I)&=pW,\\
 \Phi(S)&=F(S)+B(I)=\sum_jw_j\bigl(T_j(S)+p\bigr).
\end{aligned}
 \label{eq:shift}
\end{equation}
Thus $\Phi^*=F^*+pW>0$, even when all jobs can be completed on time. The
shift scales naturally under changes of time or weight units, does not depend
on the algorithm's output, and is independent of the schedule. Hence $F$ and
$\Phi$ have the same ordering and the same minimizers. The shift does, however,
change approximation ratios, so the guarantee
below is explicitly a guarantee for $\Phi$, not for the unshifted objective
$F$.

\subsection{The phase-grid assignment algorithm}

For $a\in[0,p)$, the phase-$a$ grid consists of starts $a+kp$,
$k\in\mathbb Z$, subject to nonnegative starting times. Let
\[
 \mathcal A=\{r_j\bmod p:1\le j\le N\}.
\]
For each distinct phase $a\in\mathcal A$, the algorithm performs the following
steps.
\begin{enumerate}
\item Round every release date upward to the phase-$a$ grid:
\[
 R_j(a)=a+p\left\lceil\frac{r_j-a}{p}\right\rceil.
\]
\item Generate the candidate starts
\[
 \mathcal T_a=\{R_i(a)+kp:1\le i\le N,\ 0\le k<N\}.
\]
\item Construct a bipartite graph with one node for every job and one node
for every distinct $t\in\mathcal T_a$. Include edge $(j,t)$ if $t\ge r_j$,
with cost $w_j(t+p-d_j)^+$, and find a minimum-cost matching covering every
job.
\item Interpret the matching as a schedule and retain the least-cost schedule
over all phases.
\end{enumerate}
Different slots on one grid define disjoint intervals of length $p$, so every
matching is feasible. Unmatched slots represent idle time; the matching must
cover the jobs but need not cover all slots.

\begin{lemma}
\label{lem:grid-slots}
For a fixed phase $a$, the matching computes an optimal schedule among all
schedules whose starts lie on the phase-$a$ grid.
\end{lemma}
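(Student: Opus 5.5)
The plan is to establish two inclusions between the set of matchings covering the jobs and the set of phase-$a$ grid schedules, and then compare costs.

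\textbf{Matching $\Rightarrow$ schedule.} Every job-covering matching yields a feasible grid schedule of equal cost. This direction is essentially recorded after the algorithm description. The distinct slots of $\mathcal T_a$ lie on one grid and hence give disjoint intervals of length $p$. An edge $(j,t)$ exists only when $t\ge r_j$. The edge cost $w_j(t+p-d_j)^+$ is exactly the job's weighted tardiness when it starts at $t$.

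\textbf{Schedule $\Rightarrow$ matching.} This is the substantive direction. Given any feasible schedule $S$ whose starts all lie on the phase-$a$ grid, I will produce a matching of no greater cost. The obstacle is that $S$ may use grid starts outside $\mathcal T_a$, since it may contain long idle gaps. To handle this, I will mimic the left-shift argument of \eqref{eq:leftshift}, but on the grid.

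First, observe that a grid start $s_j\ge r_j$ satisfies $s_j\ge R_j(a)$. This holds because $R_j(a)$ is the least grid point at or above $r_j$.

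Next, let $\pi$ be the job order of $S$ and define the grid-earliest schedule
\[
\widehat s_{\pi(k)}=\max\{R_{\pi(k)}(a),\ \widehat s_{\pi(k-1)}+p\},
\]
with the convention that the second term is absent for $k=1$. Each $\widehat s$ is a grid point, because both arguments of the maximum are grid points. By induction, exactly as after \eqref{eq:leftshift}, we get $\widehat s_{\pi(k)}\le s_{\pi(k)}$. The resulting schedule is feasible and non-overlapping. Its cost is no larger, since tardiness is nondecreasing in completion time.

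Then I show that every $\widehat s_{\pi(k)}$ belongs to $\mathcal T_a$. Let $i$ be the last position at or before $k$ where the maximum is attained by the release term. Then
\[
\widehat s_{\pi(k)}=R_{\pi(i)}(a)+(k-i)p, \qquad 0\le k-i<N,
\]
which is of the form required by $\mathcal T_a$. Distinct jobs receive distinct grid starts, so the assignment $\pi(k)\mapsto\widehat s_{\pi(k)}$ is a matching. Each assigned pair is an edge, because $\widehat s\ge R_j(a)\ge r_j$. The matching's cost equals $F(\widehat S)\le F(S)$.

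\textbf{Conclusion.} Combining both directions, the minimum-cost matching value equals the minimum of $F$ over all phase-$a$ grid schedules. The schedule read off from an optimal matching attains this minimum. I would also note that such a matching exists: the all-grid-earliest schedule for any order provides one, since $|\mathcal T_a|\ge N$.

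The main care point is the membership $\widehat s\in\mathcal T_a$, which requires bounding the run length $k-i<N$. This follows since there are only $N$ positions.
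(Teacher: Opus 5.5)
Your proof is correct and follows essentially the same route as the paper. Both left-justify a given grid schedule within the phase-$a$ grid while preserving job order, observe that each maximal consecutive block begins at some rounded release $R_i(a)$, so every start has the form $R_i(a)+kp$ with $0\le k<N$ and lies in $\mathcal T_a$, and use the converse that any job-covering matching is a feasible grid schedule whose cost equals its edge cost. Your extra details, namely that a grid start $s_j\ge r_j$ forces $s_j\ge R_j(a)$ and the explicit induction $\widehat s_{\pi(k)}\le s_{\pi(k)}$, just spell out steps the paper leaves implicit. One small correction: a covering matching exists because your second direction applies to any grid-earliest schedule, not because $|\mathcal T_a|\ge N$ on its own.
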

\begin{proof}
Take any feasible grid schedule and preserve its job order. Left justify it
within the grid: start the first job at its rounded release and each successor
at the maximum of its rounded release and the preceding completion time. This
cannot increase any completion time. Every maximal consecutive block begins
at some $R_i(a)$, and successive jobs in that block start at
$R_i(a)+kp$ for $0\le k<N$. Hence all starts belong to $\mathcal T_a$.
Conversely, any matching covering all jobs gives a feasible grid schedule
whose objective value is exactly its total edge cost.
\end{proof}

There are at most $N$ phases, at most $N^2$ slots per phase, and at most
$N^3$ assignment edges per phase. A rectangular Hungarian implementation
uses $O(N^2M)$ arithmetic operations for $N$ jobs and 
%{\color{red}$S_a = |{\cal T}_a|$} 
$M = |{\cal T}_a|$ slots, so the full
algorithm uses $O(N^5)$ arithmetic operations. All generated times and costs
have polynomial binary length. In particular, the algorithm does not
enumerate all integer times, every residue $0,\ldots,p-1$, or the full time
horizon.

\subsection{Worst-case analysis}

For any job order  $\pi$, left shifting produces its earliest feasible schedule:
\[
 s_{\pi(1)}=r_{\pi(1)},\qquad
 s_{\pi(k)}=\max\{r_{\pi(k)},s_{\pi(k-1)}+p\}\quad(k\ge2).
\]
It follows that an optimum exists in which every start has the form $r_i+kp$
for some $0\le k<N$.  Consequently, we can choose an earliest optimal schedule in which the
residue of every start belongs to $\mathcal A$.

\begin{lemma}
\label{lem:phase-rounding}
If all starts in a feasible schedule are rounded upward to the same phase-$a$
grid, the resulting schedule is feasible. If job $j$ is delayed by
$\delta_j(a)$, then $0\le\delta_j(a)<p$ and its cost increases by at most
$w_j\delta_j(a)$.
\end{lemma}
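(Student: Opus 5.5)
The plan is to define the rounding map explicitly and check feasibility and cost directly. For a feasible schedule with starts $s_j$, set
\[
 s_j'=a+p\left\lceil\frac{s_j-a}{p}\right\rceil ,
\]
the smallest point of the phase-$a$ grid that is at least $s_j$, and let $\delta_j(a)=s_j'-s_j$. From the definition of the ceiling, $0\le\delta_j(a)<p$. This gives the stated range for the delay. It also gives $s_j'\ge s_j\ge r_j$, so every release date is still respected. If some $s_j<a$, the rounded start is $a$ itself, and the same bound holds. Nonnegativity of the starts is therefore preserved.

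The next step is to show that the rounded intervals remain disjoint. The key observation is that rounding up to a lattice of spacing $p$ is monotone and keeps gaps of at least $p$. Take two jobs $i,j$ with $s_i<s_j$. Disjointness of the original intervals gives $s_j\ge s_i+p$. The map $x\mapsto a+p\lceil (x-a)/p\rceil$ is nondecreasing and commutes with adding $p$. Hence
\[
 s_j'\ge (s_i+p)'=s_i'+p .
\]
So the rounded intervals $[s_j',s_j'+p)$ are pairwise disjoint, nonpreemption is unaffected, and the rounded schedule is feasible.

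For the cost bound, the new completion time of job $j$ is $C_j+\delta_j(a)$. The function $t\mapsto w_j(t-d_j)^+$ is $w_j$-Lipschitz and nondecreasing. Therefore the cost of job $j$ increases by at least $0$ and at most $w_j\delta_j(a)$.

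The only step needing care is the disjointness argument. The point to verify there is the identity $(x+p)'=x'+p$ for the rounding map, which holds because $\lceil y+1\rceil=\lceil y\rceil+1$. Everything else follows immediately from the definitions.
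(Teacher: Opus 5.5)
Your proof is correct and follows essentially the same route as the paper. It uses the same upward-rounding map $a+p\lceil (s_j-a)/p\rceil$, the ceiling bounds giving $0\le\delta_j(a)<p$, monotonicity together with $\lceil y+1\rceil=\lceil y\rceil+1$ to keep starts at least $p$ apart, and the fact that each completion time increases by exactly $\delta_j(a)$; the paper checks disjointness for successive jobs where you check every pair with $s_i<s_j$, which comes to the same thing.
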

\begin{proof}  For each job \(j\), let \[ \widehat{s}_j =a+p\left\lceil\frac{s_j-a}{p}\right\rceil \] denote its start time after upward rounding to the phase-\(a\) grid, and let \[ \delta_j(a)=\widehat{s}_j-s_j. \] 
Since \(x\leq\lceil x\rceil<x+1\), we have $0\leq\delta_j(a)<p.$ 
Furthermore,  \(\widehat{s}_j\geq s_j\geq r_j\), so all release-date constraints remain satisfied. Consider two successive jobs \(j\) and \(k\) in the original schedule. Feasibility of that schedule implies \(s_k\geq s_j+p\). Therefore, \[ \begin{aligned} \widehat{s}_k &=a+p\left\lceil\frac{s_k-a}{p}\right\rceil\\ &\geq a+p\left\lceil\frac{s_j+p-a}{p}\right\rceil\\ &=a+p\left\lceil\frac{s_j-a}{p}\right\rceil+p\\ &=\widehat{s}_j+p. \end{aligned} \] Thus, the rounded processing intervals remain disjoint, and the resulting schedule is feasible. The rounded completion time of job \(j\) is \(\widehat{C}_j=C_j+\delta_j(a)\). %Using \[ (x+\delta)^+\leq x^++\delta \qquad\text{for }\delta\geq0, \] we obtain \[ w_j\!\left[ \bigl(\widehat{C}_j-d_j\bigr)^+ -\bigl(C_j-d_j\bigr)^+ \right] \leq w_j\delta_j(a). \] 
Hence the weighted-tardiness cost of job \(j\) increases by at most \(w_j\delta_j(a)\).  \end{proof}
\begin{comment}
\begin{proof}
All release dates remain satisfied. If two successive original starts obey
$s_k\ge s_j+p$, monotonicity of upward rounding and its invariance under
translation by $p$ give {\color{red} $R_a(s_k) = a + p\lceil(t-a)/p\rceil \ge R_a(s_j)+p$},{\color{brown}***this is not clear enough. also it uses the same symbol R already used earlier**} so the rounded intervals
remain disjoint. The cost bound follows from
$(x+\delta)^+\le x^++\delta$ for $\delta\ge0$.
\end{proof}
\end{comment}

Uniformly averaging the grid phase over $[0,p)$ gives mean displacement $p/2$
for each fixed start and hence the additive bound $pW/2$. Weight-dependent
averaging over the phases actually enumerated by the algorithm yields the
sharper result below.

\begin{theorem}
\label{thm:shifted-approx}
Let $S_{\mathrm{alg}}$ be the schedule returned by the phase-grid assignment
algorithm, and define
\[
 F_{\mathrm{alg}}=F(S_{\mathrm{alg}}),\qquad
 \Phi_{\mathrm{alg}}=\Phi(S_{\mathrm{alg}})=F_{\mathrm{alg}}+pW.
\]
Then
\begin{equation}
 F_{\mathrm{alg}}\le F^*+D(I),\qquad
 D(I)=\frac{p}{2W}\left(W^2-\sum_jw_j^2\right).
 \label{eq:strongadd}
\end{equation}
Consequently,
\begin{equation}
 \frac{\Phi_{\mathrm{alg}}}{\Phi^*}
 \le \frac32-\frac{\sum_jw_j^2}{2W^2}
 \le \frac32-\frac1{2N}<\frac32.
 \label{eq:shifted-ratio}
\end{equation}
\end{theorem}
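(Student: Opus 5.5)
The plan is to compare the algorithm against a single well-chosen optimal schedule, rounded onto each enumerated phase, and to average the rounding loss with weights chosen to match the residues that optimal schedule actually uses. By the left-shifting remark preceding Lemma~\ref{lem:phase-rounding}, fix an earliest optimal schedule $S^*$ in which every start $s_j$ has residue $c_j=s_j\bmod p\in\mathcal A$. For each residue $c\in\mathcal A$ put $W_c=\sum_{j:c_j=c}w_j$, so that $\sum_cW_c=W$.

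For each $a\in\mathcal A$, round $S^*$ upward to the phase-$a$ grid. Lemma~\ref{lem:phase-rounding} says the result is feasible, and upward rounding keeps starts nonnegative. The delay of job $j$ depends only on its residue class: $\delta_j(a)=\delta(c_j,a)$, where $\delta(c,a)=(a-c)\bmod p\in[0,p)$, and $\delta(c,c)=0$. The rounded schedule is a phase-$a$ grid schedule. Since $a\in\mathcal A$ is one of the phases the algorithm enumerates, Lemma~\ref{lem:grid-slots} gives
\[
F_{\mathrm{alg}}\le F^*+\sum_jw_j\delta(c_j,a)\qquad\text{for every }a\in\mathcal A.
\]

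Now average these inequalities with probabilities $\pi_a=W_a/W$; the minimum over $a$ is at most the average. This choice of weights is the key step: uniform averaging over $[0,p)$ would only give $pW/2$. The averaged loss is
\[
\frac1W\sum_{c,a\in\mathcal A}W_cW_a\,\delta(c,a).
\]
Pair the ordered pairs $(c,a)$ and $(a,c)$ with $c\ne a$. Because $\delta(c,a)+\delta(a,c)=p$ when $c\ne a$, each unordered pair contributes exactly $pW_cW_a$. The loss therefore equals
\[
\frac pW\sum_{\{c,a\}}W_cW_a=\frac p{2W}\Bigl(W^2-\sum_cW_c^2\Bigr).
\]
Since each $W_c$ is a sum of positive weights, $W_c^2\ge\sum_{j:c_j=c}w_j^2$, hence $\sum_cW_c^2\ge\sum_jw_j^2$. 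This yields $F_{\mathrm{alg}}\le F^*+D(I)$, which is \eqref{eq:strongadd}.

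For \eqref{eq:shifted-ratio}, write $\Phi_{\mathrm{alg}}\le\Phi^*+D(I)$ and use $\Phi^*=F^*+pW\ge pW$. This gives
\[
\frac{\Phi_{\mathrm{alg}}}{\Phi^*}\le1+\frac{D(I)}{pW}=\frac32-\frac{\sum_jw_j^2}{2W^2}.
\]
Cauchy--Schwarz gives $\sum_jw_j^2\ge W^2/N$, which supplies the middle bound, and the final strict inequality holds because $N$ is finite. The only delicate point I expect is the residue bookkeeping. The pairing identity $\delta(c,a)+\delta(a,c)=p$ needs $c\ne a$ with both residues taken in $[0,p)$. The averaging must also run only over phases in $\mathcal A$, which the algorithm actually tests; this is why the weights $\pi_a=W_a/W$ are taken from $S^*$'s own residue classes.
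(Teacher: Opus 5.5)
Your proposal is correct and follows the paper's proof. Your weights $\pi_a=W_a/W$ are exactly the phase distribution induced by the paper's choice of job $i$ with probability $w_i/W$, and your pairing identity $\delta(c,a)+\delta(a,c)=p$ is the same cancellation, with grouping by residue class only yielding the slightly sharper intermediate bound $\frac{p}{2W}\bigl(W^2-\sum_c W_c^2\bigr)$ that the paper records in its remark after the theorem.
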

\begin{proof}
Fix an earliest optimal schedule and let $a_i=s_i^*\bmod p$. Every $a_i$ is
tested. For each $i$, round all optimal starts upward to the phase-$a_i$ grid,
and let $\delta_j(a_i)$ be the resulting delay of job $j$. By
Lemmas~\ref{lem:grid-slots} and~\ref{lem:phase-rounding},
\[
 F_{\mathrm{alg}}\le F^*+\min_i\sum_jw_j\delta_j(a_i).
\]
Choose index $i$ with probability $w_i/W$ for the analysis only. Then
\[
 \min_i\sum_jw_j\delta_j(a_i)
 \le\frac1W\sum_i\sum_jw_iw_j\delta_j(a_i).
\]
When $a_i=a_j$, both cross-displacements are zero. Otherwise, the two forward
distances around a circle of circumference $p$ satisfy
\[
 \delta_j(a_i)+\delta_i(a_j)=p.
\]
The diagonal terms vanish. Grouping the double sum into unordered pairs gives
\[
 \frac1W\sum_i\sum_jw_iw_j\delta_j(a_i)
 =\frac pW\sum_{\substack{i<j\\a_i\ne a_j}}w_iw_j
 \le\frac pW\sum_{i<j}w_iw_j
 =D(I),
\]
which proves~\eqref{eq:strongadd}. Adding the shift to both the algorithmic and
optimal values and using $\Phi^*\ge pW$ yields
\[
 \frac{\Phi_{\mathrm{alg}}}{\Phi^*}
 \le1+\frac{D(I)}{F^*+pW}
 \le1+\frac{D(I)}{pW}
 =\frac32-\frac{\sum_jw_j^2}{2W^2}.
\]
Finally, $\sum_jw_j^2\ge W^2/N$ by Cauchy--Schwarz. The algorithm itself is
deterministic; the random choice is used only in the proof.
\end{proof}

For $N=1$, $D(I)=0$ and the algorithm is exact. More generally, let $q^*$ be
the number of distinct phases among the optimal starts used in the proof, and
let $q=|\mathcal A|$; then $q^*\le q$. If $W_h$ is the total weight of the
jobs in optimal-start phase $h$, grouping the pair terms by phase and applying
Cauchy--Schwarz gives
\[
 F_{\mathrm{alg}}-F^*
 \le \frac{p}{2W}\left(W^2-\sum_{h=1}^{q^*}W_h^2\right)
 \le \frac{pW}{2}\left(1-\frac1{q^*}\right)
 \le \frac{pW}{2}\left(1-\frac1q\right).
\]
Hence the ratio is also at most $3/2-1/(2q)$. When all releases share one
phase, the algorithm is exact.

\subsection{Tightness and the role of the shift}

\begin{proposition}
\label{prop:grid-tight}
For every integer $N\ge1$, an integer-data instance attains ratio
$3/2-1/(2N)$ for the phase-grid assignment algorithm and the
shift~\eqref{eq:shift}.
\end{proposition}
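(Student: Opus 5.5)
The plan is to build an instance with equal weights, $F^*=0$, and release residues that are evenly spread around the circle of circumference $p$. With that choice the averaging bound in Theorem~\ref{thm:shifted-approx} holds with equality for \emph{every} tested phase, not merely on average.

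\textbf{Construction.} Fix $N\ge1$ and take $p=N$ and $w_j=1$ for all $j$. Set $r_j=(j-1)(N+1)$ and $d_j=r_j+p$ for $j=1,\ldots,N$. All data are integers and $d_j\ge r_j+p$. For $N=1$ this is the single job $p=1$, $r=0$, $d=1$, and both sides equal $1$.

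\textbf{Step 1: the optimum.} Consecutive releases differ by $N+1>p$, so starting every job at its release gives disjoint intervals with zero tardiness. Hence $F^*=0$ and $\Phi^*=pW=N^2$.

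\textbf{Step 2: the residues.} Since $r_j\equiv j-1\pmod N$, the set of tested phases is $\mathcal A=\{0,1,\ldots,N-1\}$, with exactly one job in each residue class.

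\textbf{Step 3: lower bound for every phase.} Fix the phase $a=i-1$. The rounded release of job $j$ is $R_j(a)=r_j+\delta_j$, where $\delta_j=\bigl((i-1)-(j-1)\bigr)\bmod N=(i-j)\bmod N$. In any schedule on the phase-$a$ grid, job $j$ starts at or after $R_j(a)$. Because $d_j=r_j+p$, its tardiness is therefore at least $\delta_j$. Summing over $j$, every grid schedule costs at least $\sum_j\delta_j$. As $j$ ranges over $1,\ldots,N$, the values $(i-j)\bmod N$ run through $0,1,\ldots,N-1$, so $\sum_j\delta_j=N(N-1)/2$.

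\textbf{Step 4: the bound is attained.} Round the optimal schedule (every job at its release) upward to the phase-$a$ grid. By Lemma~\ref{lem:phase-rounding} this gives a feasible schedule in which job $j$ starts exactly at $R_j(a)$ and has tardiness exactly $\delta_j$. So this phase's optimum is exactly $N(N-1)/2$, and by Lemma~\ref{lem:grid-slots} the assignment for phase $a$ returns cost exactly $N(N-1)/2$. The same value holds for every phase, so $F_{\mathrm{alg}}=N(N-1)/2$ however ties are broken.

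\textbf{Step 5: the ratio.} Combining Steps 1 and 4,
\[
\frac{\Phi_{\mathrm{alg}}}{\Phi^*}=\frac{N^2+N(N-1)/2}{N^2}=\frac32-\frac1{2N}.
\]
This matches the upper bound in \eqref{eq:shifted-ratio}, because equal weights give $\sum_jw_j^2/W^2=1/N$.

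The main point requiring care is Step 3. The lower bound must hold for every grid schedule, not only the rounded one, and it does because tardiness here is forced by release-date rounding alone, with no dependence on capacity. I would also state explicitly that the choice $d_j=r_j+p$ is what makes each unit of delay a unit of tardiness.
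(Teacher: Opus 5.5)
Your proof is correct and follows essentially the same route as the paper. Both use unit weights, $p=N$, $d_j=r_j+p$, and one release per residue class, so every tested phase produces the delays $0,1,\ldots,N-1$; the rounding delay is a lower bound on tardiness, and scheduling each job at its rounded release attains it. The only difference is the release spacing. The paper spaces releases $2N+1$ apart, so disjointness of the rounded intervals follows directly from each delay being at most $N-1$. With your spacing $N+1$, that direct argument would not suffice, and you correctly rely instead on Lemma~\ref{lem:phase-rounding}, which shows that upward rounding to a common grid preserves the gap $p$.
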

\begin{proof}
Set $p=N$. For $i=0,\ldots,N-1$, create a unit-weight job with
\[
 r_i=i(2N+1),\qquad d_i=r_i+N.
\]
Executing each job at its release is feasible and has zero cost. Thus $F^*=0$,
$W=N$, and $\Phi^*=N^2$. The release residues are $0,1,\ldots,N-1$. At any
tested phase, rounding the releases produces a permutation of the delays
$0,1,\ldots,N-1$. The rounded intervals remain disjoint because consecutive
releases are $2N+1$ apart and each delay is at most $N-1$. On a fixed tested
phase, job $i$ cannot start before its rounded release date and therefore
incurs tardiness at least equal to its rounding delay. Scheduling every job at
its rounded release date is feasible, so the optimum on each tested grid is
the sum of the delays. Hence $F_{\mathrm{alg}}=N(N-1)/2$, and
\[
 \frac{\Phi_{\mathrm{alg}}}{\Phi^*}
 =\frac{N^2+N(N-1)/2}{N^2}=\frac32-\frac1{2N}.
\]
\end{proof}
This proves tightness for the specified phase-grid output, not approximation
hardness for the shifted problem. Left justifying the algorithm's final job
order in unrestricted continuous time can only improve its value, but it
eliminates the loss on this particular family.

For any prescribed shift $B(I)>0$, the same additive analysis gives
\begin{equation}
 \frac{F_{\mathrm{alg}}+B(I)}{F^*+B(I)}
 \le1+\frac{D(I)}{F^*+B(I)}
 \le1+\frac{pW}{2B(I)}.
 \label{eq:generalshift}
\end{equation}
Thus $B(I)=\lambda pW$ for fixed $\lambda>0$ gives ratio at most
$1+1/(2\lambda)$. Merely increasing the shift can improve this displayed
ratio without improving the schedule, so such a change is not a PTAS for a
fixed objective.

\section{Conclusion}\label{sec:conclusion}
We have shown that the single-machine total weighted tardiness problem with
release dates remains strongly NP-hard when every job has the same processing
time and is individually capable of meeting its due date
(Theorem~\ref{thm:main}). This settles the complexity question documented by
\citet{akker2010} and \citet{gafarov2020}. Complementing the hardness result,
we developed a deterministic phase-grid assignment algorithm that solves at most $N$
minimum-cost assignment problems and runs in $O(N^5)$ arithmetic operations.
For the shifted objective $\Phi=F+p\sum_jw_j$, its approximation ratio is at
most $3/2-1/(2N)$, and this bound is tight for the specified algorithm
(Theorem~\ref{thm:shifted-approx} and Proposition~\ref{prop:grid-tight}).

The normalization argument (Proposition~\ref{prop:normalization}) is the main
structural step in the hardness proof. It applies to every sufficiently
inexpensive feasible schedule and produces one binary choice per vertex, after
which an exact pairwise cost identity
(Proposition~\ref{prop:coefficients} and \eqref{eq:cutcost}) gives the MAX-CUT
correspondence. All job counts and numerical values remain polynomially
bounded (Table~\ref{tab:magnitudes}). For the algorithmic result, the key step
is instead to average the rounding losses over the release phases with
weight-dependent probabilities. This yields the sharper additive error
\eqref{eq:strongadd}, from which the shifted ratio follows. Because an
objective shift changes approximation ratios, this result is not a
multiplicative guarantee for ordinary total weighted tardiness.

The established polynomial cases in the literature remain useful directions
for algorithm design. Further restrictions on release dates, weights, their
ordering, or the number of release phases may support stronger algorithms even
though the general problem is strongly NP-hard. Determining whether the
unshifted problem admits a meaningful multiplicative guarantee on restricted
instance classes, and whether other polynomial algorithms improve on the
phase-grid assignment bound for the shifted objective, remain questions for future work.

\section*{Data and code availability}
% A8.4.6 / B2: correct availability statement to the supplied package
No external empirical data are used. The proofs of
Theorems~\ref{thm:main} and~\ref{thm:shifted-approx} are self-contained and do
not depend on computational experiments. The present manuscript package
contains the \LaTeX{} source with an internal bibliography and its compiled
PDF; it does not include supplementary implementation code or recorded
computational outputs.

\section*{Acknowledgments}
Generative artificial-intelligence tools assisted with proof development, computational checks, literature searches, and manuscript drafting.

% No forced page break: References follow Acknowledgments.
% Superseded external bibliography commands retained, but not executed:
% {\color{blue}\bibliographystyle{plainnat}}
% {\color{blue}\bibliography{references}}
% Internal bibliography: black text; author-name order follows the requested example.
% The author-year citation convention and all 20 original entries are retained.
% The additional 2024 entry is kuehn2024.
{\color{black}

\par}
\end{document}